\newtheorem{thrm}{Theorem}
\newtheorem{lemm}[thrm]{Lemma}
\newtheorem{defn}[thrm]{Definition}
\newtheorem{examp}[thrm]{Example}
\newtheorem{obs}[thrm]{Observation}
\newenvironment{my_itemize}{
\begin{itemize}
  \setlength{\itemsep}{1pt}
  \setlength{\parskip}{0pt}
  \setlength{\parsep}{0pt}}{\end{itemize}
}
\def\qedbox#1#2{\vbox{\hrule height.2pt
  \hbox{\vrule width.2pt height#2pt \kern#1pt \vrule width.2pt}
  \hrule height.2pt}}
\def\qed{\hfill \quad\qedbox46\newline}
\def\dd{\mathinner{\ldotp\ldotp}}
\def\s#1{\mbox{\boldmath $#1$}}
\def\+{\!+\!}
\def\-{\!-\!}
\def\m{\!-\!}
\def\itbf#1{\textit{\textbf{#1}}}
\def\ACTIVE{\mbox{ACTIVE}}
\def\peek{\bf{peek}}
\def\bproc{{\bf procedure\ }}
\def\bfor{{\bf for\ }}
\def\bto{{\bf to\ }}
\def\bwhile{{\bf while\ }}
\def\band{{\bf and\ }}
\def\bor{{\bf or\ }}
\def\bdo{{\bf do\ }}
\def\bif{{\bf if\ }}
\def\bthen{{\bf then\ }}
\def\belse{{\bf else\ }}
\def\breturn{{\bf return\ }}
\def\la{\leftarrow}
\def\q{\quad}
\def\qq{\qquad}
\def\com#1{{\bf $\triangleright$}\hspace{6pt}{\sl #1}}
\def\pref(#1,#2){$#1$ is a prefix of $#2$}
\def\suff(#1,#2){$#1$ is a suffix of $#2$}
\def\SA{\mbox{SA}}
\def\ISA{\mbox{ISA}}
\def\offset{\mbox{offset}}
\def\prev{\mbox{prev}}
\def\period{\mbox{period}}
\def\per{\textbf{period}}
\def\nextequal{\mbox{nextequal}}
\def\nexteq{\textbf{nextequal}}
\def\reg(#1,#2){$#2$ is $#1$-regular}
\def\notreg(#1,#2){$#2$ is not $#1$-regular}
\def\UPDATE\_F{\tt{UPDATE\_F}}
\def\L{\mathcal{L}}
\def\ACTIVE{\rm{ACTIVE}}
\def\MVR*{\rm{MVR^*}}
\def\MVR{\rm{MVR}}
\def\bpush{{\bf push}}
\def\bpeek{{\bf peek}}
\def\bpop{{\bf pop}}
\def\O{\mathcal{O}}
\def\MATCH{\rm{MATCH}}
\def\EXIT{\rm{EXIT}}
\def\COMP{\rm{COMP}}
\def\NSV{\mbox{NSV}}
\def\PNSV{\mbox{PNSV}}
\def\NPNSV{\mbox{NPNSV}}
\newif\ifProofs
\newif\ifRev
\begin{document}

\pagestyle{headings}
\title{Algorithms to Compute the Lyndon Array\thanks
{This work was supported in part by the
Natural Sciences \& Engineering Research Council of Canada.
The authors wish to thank Maxime Crochemore and Hideo Bannai
for helpful discussions.}
}
\author{Frantisek Franek\inst{1}
\and
A.\ S.\ M.\ Sohidull Islam\inst{2}
\and M.\ Sohel Rahman\inst{3}
\and \\ 
W.\ F.\ Smyth\inst{1,3,4}
}
\institute{$\!^1\ $Algorithms Research Group \\ 
Department of Computing \& Software \\ 
McMaster University, Hamilton, Canada \\
\email{$\{$franek/smyth$\}$@mcmaster.ca} \\ 
{\ } \\ 
$\!^2\ $School of Computational Science \& Engineering \\ 
McMaster University, Hamilton, Canada \\ 
\email{sohansayed@gmail.com} \\ 
{\ } \\ 
$\!^3\ $Department of Computer Science \& Engineering \\
Bangladesh University of Engineering \& Technology \\
\email{msrahman@cse.buet.ac.bd} \\
{\ } \\ 
$\!^4\ $School of Engineering \& Information Technology \\
Murdoch University, Perth, Australia
}

\maketitle

\begin{abstract}
In the Lyndon array $\s{\s{\lambda}} = \s{\s{\lambda}}_{\s{x}}[1..n]$ of a string $\s{x} = \s{x}[1..n]$,
$\s{\lambda}[i]$ is the length of the longest Lyndon word starting at
position $i$ of \s{x}.
The computation of $\s{\lambda}$ has recently become of
great interest, since it was shown
(Bannai {\it et al.}, {\bf The ``Runs'' Theorem} \cite{BIINTT14})
that the runs in \s{x} are computable in linear time
from $\s{\lambda_x}$.
Here we first describe three algorithms for computing $\s{\lambda_x}$
that have been suggested in the literature,
but for which no structured exposition has been given.
Two of these algorithms execute in $\O(n^2)$ time in the worst case;
the third achieves $\Theta(n)$ time,
but at the expense of prior computation of both the suffix array
and the inverse suffix array of \s{x}.
We then go on to describe two variants of a new algorithm that
avoids prior computation of global data structures
and executes in worst-case $\O(n\log n)$ time.
Experimental evidence suggests that all but one of these five
algorithms require only linear execution time in practice,
with the two new algorithms faster by a small factor.
We conjecture that there exists a fast
and worst-case linear-time algorithm to compute the Lyndon array
that is also ``elementary''
(making no use of global data structures such as the suffix array).
\end{abstract}

\section{Introduction}
\label{sect-intro}
If $\s{x} = \s{uv}$ for some \s{u} and nonempty \s{v},
then \s{vu} is said to be the $|\s{u}|^{\mbox{th}}$ \itbf{rotation} of \s{x},
written $\s{vu} = R_{|\s{u}|}(x)$.
If there exists a string \s{u} and an integer $e > 1$
such that $\s{x} = \s{u}^e$,
then \s{x} is said to be a \itbf{repetition};
otherwise \s{x} is \itbf{primitive}.
A primitive string \s{x} that is
lexicographically least among all its rotations
$R_k(\s{x}), k= 0,1,\ldots,|\s{x}|\- 1$,
is said to be a \itbf{Lyndon word}.

The \itbf{Lyndon array} $\s{\lambda} = \s{\lambda}_{\s{x}}[1..n]$
(equivalently, $\L = \L_{\s{x}}[1..n]$)
of a given nonempty string $\s{x} = \s{x}[1..n]$
gives at each position $i$ the length
(equivalently, the end position)
of the longest Lyndon word starting at $i$:

\begin{equation}
\label{ex1}
\begin{array}{rccccccccc}
\scriptstyle 1 & \scriptstyle 2 & \scriptstyle 3 & \scriptstyle 4 & \scriptstyle 5 & \scriptstyle 6 & \scriptstyle 7 & \scriptstyle 8 & \scriptstyle 9 & \scriptstyle 10 \\
\s{x} = a & b & a & a & b & a & b & a & a & b \\
\s{\lambda} = 2 & 1 & 5 & 2 & 1 & 2 & 1 & 3 & 2 & 1 \\
\L = 2 & 2 & 7 & 5 & 5 & 7 & 7 & 10 & 10 & 10
\end{array}
\end{equation}
The Lyndon array has recently become of interest since
Bannai {\it et al.} \cite{BIINTT14} showed that it could
be used to efficiently compute all the maximal periodicities (``runs'')
in a string.
In this paper we describe four algorithms to compute $\s{\lambda}_{\s{x}}$,
three of them shown experimentally to be running in $\Theta(n)$ time in practice.
Section~\ref{sect-prelim} makes various observations that
apply generally to the Lyndon array and its computation.
In Section~\ref{sect-folklore} we describe three
algorithms, two that require $\O(n^2)$ time in the worst case,
of which one is very fast and apparently linear in practice,
the other supralinear in practice and $\O(n\log n)$ in the average case on binary strings.
The third algorithm is simple and worst-case linear-time,
but requires suffix array construction and so is a little slower.
Section~\ref{sect-newalgs} describes two variants of a new algorithm
that uses only elementary data structures (no suffix arrays).
One variant is $\O(n^2)$ in the worst case,
the other guarantees $\O(n\log n)$ time,
but with no clear advantage in processing time.
Section~\ref{sect-exp} describes the results of
preliminary experiments on the algorithms;
Section~\ref{sect-future} outlines future work.

\section{Preliminaries}
\label{sect-prelim}
Here we make various observations that apply to
the algorithms described below.
\begin{obs}
\label{obs-decomp}
Let $\s{x} = \s{w_1w_2\cdots w_k}$ be the Lyndon decompostion \cite{CFL58,D83} of \s{x},
with Lyndon words $\s{w_1} \ge \s{w_2} \ge \cdots \ge \s{w_k}$.
Then every Lyndon word $\s{x}[i..\L[i]]$ of length $\s{\lambda}[i]$
is a substring of some $\s{w_h},\ h \in 1..k$.
\end{obs}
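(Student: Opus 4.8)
The plan is to argue by contradiction using the defining property of the Lyndon decomposition, namely that $\s{w_1} \ge \s{w_2} \ge \cdots \ge \s{w_k}$ with each $\s{w_h}$ a Lyndon word. Fix a position $i$ and let $\s{y} = \s{x}[i..\L[i]]$ be the longest Lyndon word starting at $i$. Let $\s{w_h}$ be the factor of the decomposition that contains position $i$, and suppose for contradiction that $\s{y}$ is \emph{not} a substring of $\s{w_h}$, i.e.\ $\s{y}$ extends past the right end of $\s{w_h}$ into $\s{w_{h+1}}$ (and possibly beyond). Write $\s{w_h} = \s{uv}$ where $\s{u} = \s{x}[i..]$ up to the end of $\s{w_h}$, so $\s{v}$ is the (possibly empty) prefix of $\s{w_h}$ lying strictly before $i$, and $\s{u}$ is a nonempty proper suffix of the Lyndon word $\s{w_h}$; thus $\s{y}$ has $\s{u}$ as a proper prefix.

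The key step is to invoke the standard characterization of Lyndon words in terms of suffixes: a string $\s{z}$ is a Lyndon word iff $\s{z}$ is strictly smaller than every proper nonempty suffix of $\s{z}$; equivalently, every proper nonempty suffix $\s{u}$ of a Lyndon word $\s{w_h}$ satisfies $\s{u} > \s{w_h}$. Now compare $\s{y}$ with $\s{w_{h+1}}$. Since $\s{y}$ begins with $\s{u}$ and then continues with a prefix of $\s{w_{h+1}}$, and since a Lyndon word is strictly smaller than all its proper suffixes, one shows that the suffix of $\s{y}$ starting at the first position of $\s{w_{h+1}}$ — call it $\s{y'}$ — must be strictly larger than $\s{y}$; but $\s{y'}$ is a prefix of $\s{w_{h+1}}\s{w_{h+2}}\cdots$, so $\s{y'} \le \s{w_{h+1}} \le \s{w_h} < \s{u} \le \s{y}$ (using that $\s{u} > \s{w_h}$ because $\s{u}$ is a proper suffix of the Lyndon word $\s{w_h}$, and $\s{y}$ begins with $\s{u}$ but may continue). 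This chain of inequalities forces $\s{y'} < \s{y}$, contradicting that $\s{y}$, being a Lyndon word, is strictly smaller than its proper suffix $\s{y'}$.

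I expect the delicate point to be the handling of prefixes versus full words in the lexicographic comparisons — in particular making rigorous the claim that the suffix $\s{y'}$ of $\s{y}$ is both (i) forced to be larger than $\s{y}$ by the Lyndon property and (ii) bounded above by $\s{w_{h+1}}$, when $\s{y'}$ may be only a proper prefix of $\s{w_{h+1}}\cdots\s{w_k}$ rather than equal to any $\s{w_j}$. Resolving this cleanly requires care about whether one string is a prefix of another (so that lexicographic order does not immediately give a strict inequality). A robust way to finish is to reduce to the inductive/minimal-counterexample form: take a counterexample with $\L[i]$ as small as possible, so that $\s{y}$ extends exactly one position past the end of $\s{w_h}$; then $\s{y} = \s{u}\,c$ where $c$ is a single letter, $\s{u}$ is a proper suffix of $\s{w_h}$, and one derives the contradiction directly from $\s{u} > \s{w_h} \ge \s{w_{h+1}}$ together with the fact that $c$ is the first letter of $\s{w_{h+1}}$, hence of $\s{w_h}$ as well (since $\s{w_{h+1}} \le \s{w_h}$ and $\s{w_h}$ is Lyndon), giving a clash with the Lyndon property of $\s{y}$. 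Once the one-step case is nailed, the general case follows since any longer overhang contains the one-step overhang as a prefix situation.
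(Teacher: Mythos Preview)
Your overall strategy is the paper's own: show that a word straddling a boundary of the Lyndon decomposition has a proper suffix that is too small for the word to be Lyndon. The execution, however, has a genuine gap. The chain $\s{y'} \le \s{w_{h+1}} \le \s{w_h} < \s{u} \le \s{y}$ fails at its first step whenever $\s{y}$ extends past $\s{w_{h+1}}$: then $\s{y'}$ has $\s{w_{h+1}}$ as a \emph{proper prefix}, so $\s{y'} > \s{w_{h+1}}$, not $\le$. Your proposed repair does not close the gap either: choosing a counterexample with $\L[i]$ minimal does not force the overhang to be a single letter (different positions $i$ give different $\s{y}$'s, and you cannot truncate a fixed $\s{y}$), and the assertion that ``the general case follows since any longer overhang contains the one-step overhang as a prefix situation'' is unjustified --- a prefix of a Lyndon word need not be Lyndon, so knowing $\s{u}c$ is not Lyndon says nothing about $\s{u}c\cdots$. (Incidentally, in your one-step case, $c$ being the first letter of $\s{w_{h+1}}$ does not make it the first letter of $\s{w_h}$; what you actually get, and what suffices, is $c \le \s{w_h}[1] \le \s{u}[1]$.)

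The clean fix --- and this is precisely the paper's argument --- is to compare $\s{y}$ not with $\s{y'}$ but with the \emph{last} fragment of $\s{y}$. Write $\s{y} = \s{v_h}\s{w_{h+1}}\cdots\s{w_{h+t-1}}\s{u_{h+t}}$, where $\s{v_h}$ is a nonempty proper suffix of $\s{w_h}$ and $\s{u_{h+t}}$ is a nonempty prefix of $\s{w_{h+t}}$. Then $\s{u_{h+t}} \le \s{w_{h+t}} \le \s{w_h} < \s{v_h}$, the strict inequality because a proper suffix of a Lyndon word exceeds the word. Since $\s{v_h}$ is a prefix of $\s{y}$, this yields $\s{u_{h+t}} < \s{y}$ (in the prefix case one uses that Lyndon words are border-free); but $\s{u_{h+t}}$ is a proper suffix of $\s{y}$, so $\s{y}$ is not Lyndon. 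Note finally a small inconsistency in your setup: you allow $\s{v}$ to be empty yet call $\s{u}$ a \emph{proper} suffix; when $i$ is the first position of $\s{w_h}$ you need the separate (standard) fact that $\s{w_h}$ is the longest Lyndon prefix of $\s{w_h}\s{w_{h+1}}\cdots\s{w_k}$. Also, with $\s{v}$ the prefix and $\s{u}$ the suffix you mean $\s{w_h} = \s{v}\s{u}$, not $\s{u}\s{v}$.
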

\begin{proof}
For some $h \in 1..k\- 1$, consider \s{w_h} with nonempty proper suffix \s{v_h},
and for some $t \in 1..k\- h$, consider \s{w_{h+t}} with nonempty prefix \s{u_{h+t}}.
Since \s{w_h} is a Lyndon word, $\s{w_h} < \s{v_h}$,
and by lexorder, $\s{u_{h+t}} \le \s{w_{h+t}}$.  Thus
$\s{v_h} > \s{w_h} \ge \s{w_{h+t}} \ge \s{u_{h+t}}$,
and so $\s{v_hw_{h+1}}\cdots\s{w_{h+t-1}u_{h+t}}$
cannot be a Lyndon word for any choice of $h$ or $t$.  \qed
\end{proof}
Therefore to compute $\L_{\s{x}}$ it suffices to consider separately
each distinct element \s{w_h} in the Lyndon decomposition of \s{x}.
Hence, without loss of generality suppose \s{x} is a Lyndon word
and write it in the form $\s{x_1x_2}\cdots\s{x_m}$,
where for each $r \in 1..m$, $|\s{x_r}| = \ell_r$ and
\begin{equation}
\label{range}
\s{x_r}[1] \le \s{x_r}[2] \le \cdots \le \s{x_r}[\ell_r],
\end{equation}
while for $1 \le r < m$,
\begin{equation}
\label{drop}
\s{x_r}[\ell_r] > \s{x_{r+1}}[1].
\end{equation}
We call \s{x_r} a \itbf{range} in \s{x} and the boundary between
\s{x_r} and \s{x_{r+1}} a \itbf{drop}.
We identify a position $j$ in range \s{x_r}, $1 \le j \le \ell_r$,
with its equivalent position $i$ in \s{x} by writing
$i = S_{r,j} = \sum_{r'=1}^{r-1} \ell_{r'}\+ j$.
\begin{obs}
\label{obs-endrange}
Let $i = S_{r,j}$ be a position in \s{x} that corresponds to position $j$
in range \s{x_r}.
\begin{itemize}
\item[$(a)$]
If $\s{x_r}[j] = \s{x_r}[\ell_r]$, then $\L[i] = i$.
\item[$(b)$]
Otherwise, $\L[i] = i'$, where $i'$ is the final position in some range $\s{x_{r'}},\ r' \ge r$;
that is, $i' = \sum_{s=1}^{r'} \ell_s$.
\end{itemize}
\end{obs}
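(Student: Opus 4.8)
The plan is to treat the two parts separately, with part $(b)$ resting on a lemma that extends a Lyndon word by a single letter on the right. Throughout I use the standard characterization (equivalent to the rotation-based definition above) that a word \s{u} with $|\s{u}| \ge 2$ is a Lyndon word if and only if it is lexicographically strictly smaller than each of its proper nonempty suffixes; in particular, any such \s{u} satisfies $\s{u}[1] < \s{u}[|\s{u}|]$, since otherwise \s{u} would not be smaller than its length-one suffix $\s{u}[|\s{u}|]$. Write $\ell = S_{r,\ell_r} = \sum_{s=1}^{r} \ell_s$ for the final position of range \s{x_r} in \s{x}.

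For part $(a)$: by (\ref{range}), the hypothesis $\s{x_r}[j] = \s{x_r}[\ell_r]$ forces $\s{x}[i] = \s{x}[i+1] = \cdots = \s{x}[\ell] =: c$. If $i = n$ then $\L[i] = i$ trivially. Otherwise, suppose $\s{x}[i..k]$ were a Lyndon word for some $k > i$. If $k \le \ell$ then $\s{x}[i..k] = c^{\,k-i+1}$ is a repetition, hence not a Lyndon word. If $k > \ell$ then $r < m$, and (\ref{drop}) gives $\s{x}[\ell+1] < \s{x}[\ell] = c = \s{x}[i]$, so the proper suffix $\s{x}[\ell+1..k]$ of $\s{x}[i..k]$ begins with a letter smaller than $\s{x}[i]$; hence $\s{x}[i..k] > \s{x}[\ell+1..k]$, contradicting that $\s{x}[i..k]$ is Lyndon. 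Thus no Lyndon word of length $\ge 2$ starts at position $i$, and $\L[i] = i$.

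For part $(b)$ the main tool is the following \emph{extension lemma}: if \s{w} is a Lyndon word with $|\s{w}| \ge 2$ and $c$ is a letter with $c \ge \s{w}[|\s{w}|]$, then $\s{w}c$ is a Lyndon word. I would prove it by checking that $\s{w}c$ is strictly smaller than each of its proper nonempty suffixes; these are exactly the words $\s{s}c$ with \s{s} a proper nonempty suffix of \s{w}, together with the single letter $c$. For $\s{s}c$: since \s{w} is Lyndon we have $\s{w} < \s{s}$, and since \s{s} is a \emph{proper} suffix of \s{w} it is not a prefix of \s{w}, so \s{w} and \s{s} first disagree at some position $p \le |\s{s}|$ with $\s{w}[p] < \s{s}[p]$; appending $c$ to both words leaves that mismatch in place, so $\s{w}c < \s{s}c$. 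For the letter $c$: $\s{w}[1] < \s{w}[|\s{w}|] \le c$, so $\s{w}c$ begins with a letter smaller than $c$, giving $\s{w}c < c$. Hence $\s{w}c$ is Lyndon. This lemma is the crux of the proof; the delicate point is that, \s{s} being a \emph{proper} suffix, the inequality $\s{w} < \s{s}$ is forced at an internal mismatch rather than by one word being a prefix of the other, which is exactly what keeps the appended letter harmless, and likewise it is $\s{w}[1] < \s{w}[|\s{w}|]$ that makes $\s{w}c < c$.

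It remains to deduce part $(b)$. First, $\s{x_r}[j] < \s{x_r}[\ell_r]$ forces $\L[i] > i$: letting $j^{\ast}$ be the least index in $j+1, \ldots, \ell_r$ with $\s{x_r}[j^{\ast}] > \s{x_r}[j]$, the word $\s{x}[i..S_{r,j^{\ast}}]$, consisting of one or more copies of the letter $\s{x_r}[j]$ followed by the strictly larger letter $\s{x_r}[j^{\ast}]$, is a Lyndon word, so $\L[i] \ge S_{r,j^{\ast}} > i$. Now set $k = \L[i]$ and suppose $k$ does not end a range; then $k < n$ and $k, k+1$ lie in the same range, so (\ref{range}) gives $\s{x}[k] \le \s{x}[k+1]$. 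Applying the extension lemma to the Lyndon word $\s{w} = \s{x}[i..k]$, of length $k - i + 1 \ge 2$, with $c = \s{x}[k+1] \ge \s{x}[k] = \s{w}[|\s{w}|]$, shows $\s{x}[i..k+1]$ is a Lyndon word, contradicting the maximality of $\L[i]$. Therefore $\L[i]$ is the final position of some range \s{x_{r'}}; since $\L[i] \ge i$ and $i$ lies in range \s{x_r}, we get $r' \ge r$, and $i' = \L[i] = \sum_{s=1}^{r'} \ell_s$, as claimed.
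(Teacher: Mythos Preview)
Your proof is correct and follows essentially the same line as the paper's: for part~(b) both arguments suppose $\L[i]$ lies strictly inside a range, use (\ref{range}) to get $\s{x}[\L[i]] \le \s{x}[\L[i]\+ 1]$, and then extend the Lyndon word $\s{x}[i..\L[i]]$ by one letter to reach a contradiction. The only difference is one of packaging: the paper appeals to the Lyndon decomposition theorem \cite{CFL58} (two Lyndon words $\s{u} < \s{v}$ concatenate to a Lyndon word, here with $\s{v}$ the single letter $\s{x}[\L[i]\+ 1]$), whereas you prove the required one-letter extension lemma directly from the suffix characterization---and in doing so you make explicit the auxiliary fact $\L[i] > i$ in case~(b), which the paper uses tacitly.
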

\begin{proof}
(a) is an immediate consequence of (\ref{range}) and (\ref{drop}).
To prove (b), suppose that $\s{x}[i..\L[i]]$ is a maximum-length Lyndon word,
where $\L[i]$ falls within range $r'$ but $\L[i] < i'$.
Since by (\ref{range}) $\s{x}[\L(i)] \le \s{x}[\L[i]\+ 1]$,
there are two consecutive Lyndon words $\s{x}[i..\L[i]],\s{x}[\L[i]\+ 1]$
that by the Lyndon decomposition theorem \cite{CFL58}
can be merged into a single Lyndon word $\s{x}[i..\L[i]\+ 1]$.
Thus $\s{x}[i..\L[i]]$ is not maximum-length, a contradiction.  \qed
\end{proof}
We see then that if $\s{x_r}[j] < \s{x_r}[\ell_r]$,
then $\s{x_r}[j..\ell_r]$ is a (not necessarily maximum-length) Lyndon word,
and for $i = S_{r,j}$, $\L[i] \ge S_{r,\ell_r}$:
\begin{equation}
\label{ex2}
\begin{array}{rcccccccccccc}
\scriptstyle 1 & \scriptstyle 2 & \scriptstyle 3 & \scriptstyle 4 & \scriptstyle 5 & \scriptstyle 6 & \scriptstyle 7 & \scriptstyle 8 & \scriptstyle 9 & \scriptstyle 10 & \scriptstyle 11 & \scriptstyle 12 & \scriptstyle 13 \\
\s{x} = a & a & a & b\,| & a & a & b\,| & a & b\,| & a & a & b & b \\
\L = 13 & 13 & 4 & 4 & 9 & 7 & 7 & 9 & 9 & 13 & 13 & 12 & 13
\end{array}
\end{equation}

More generally, the vectors $(i,\L[i])$ satisfy a ``Monge'' property
that is exploited by Algorithm $\NSV^*$ (Section~\ref{sect-newalgs}):
\begin{obs}
\label{obs-monge}
Suppose positions $i,j$ in $\s{x}[1..n]$ satisfy $1 \le i < j \le n$.
Then either $\L[i] \le j$ or $\L[i] \ge \L[j]$:
the vectors $(i,\L[i])$ and $(j,\L[j])$ are nonintersecting.
\end{obs}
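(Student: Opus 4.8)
The plan is to argue by contradiction: suppose there exist positions $i < j$ with $j < \L[i] < \L[j]$, so that the two Lyndon words $\s{x}[i..\L[i]]$ and $\s{x}[j..\L[j]]$ ``cross'', with $j$ strictly inside the first interval and $\L[i]$ strictly inside the second. First I would record the two facts that will do the work. Since $\s{x}[i..\L[i]]$ is a Lyndon word and $j$ lies in $(i,\L[i]]$, its proper suffix $\s{x}[j..\L[i]]$ is strictly greater than the whole word: $\s{x}[j..\L[i]] > \s{x}[i..\L[i]]$; a fortiori $\s{x}[j..\L[i]]$, being a prefix of the longer string $\s{x}[j..\L[j]]$ together with the usual Lyndon comparison, is comparable to $\s{x}[j..\L[j]]$, and in fact $\s{x}[j..\L[j]] $ has $\s{x}[j..\L[i]]$ as a proper prefix.

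Next I would exploit the maximality of $\L[i]$. Consider the factor $\s{x}[i..\L[j]]$, which strictly contains $\s{x}[i..\L[i]]$. I want to show it is itself a Lyndon word, contradicting the definition of $\s{\lambda}[i]$ (equivalently $\L[i]$) as the \emph{longest} Lyndon word starting at $i$. To see this, write $\s{x}[i..\L[j]] = \s{x}[i..\L[i]]\cdot\s{x}[\L[i]\+ 1..\L[j]]$. The second piece $\s{x}[\L[i]\+ 1..\L[j]]$ is a proper suffix of the Lyndon word $\s{x}[j..\L[j]]$, hence is strictly greater than $\s{x}[j..\L[j]]$ and therefore strictly greater than its prefix $\s{x}[j..\L[i]]$, which in turn is $\ge \s{x}[i..\L[i]]$ because the latter is a Lyndon word with $\s{x}[j..\L[i]]$ as a suffix. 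Chaining these, $\s{x}[\L[i]\+ 1..\L[j]] > \s{x}[i..\L[i]]$. By the standard concatenation lemma for Lyndon words — if $\s{u},\s{v}$ are Lyndon words with $\s{u} < \s{v}$ then $\s{uv}$ is a Lyndon word, which is exactly the tool already invoked in the proof of Observation~\ref{obs-endrange} — it suffices to know that both $\s{x}[i..\L[i]]$ and $\s{x}[\L[i]\+ 1..\L[j]]$ are Lyndon words and the first is smaller than the second. The first is Lyndon by hypothesis; for the second I would note it is a nonempty suffix of the Lyndon word $\s{x}[j..\L[j]]$, and every nonempty suffix of a Lyndon word, while not itself necessarily Lyndon, can be taken as its own Lyndon decomposition's first factor — more cleanly, I would instead replace $\s{x}[\L[i]\+ 1..\L[j]]$ by the first factor of its Lyndon decomposition and iterate, or simply invoke that $\s{x}[i..\L[j]]$ admits a Lyndon decomposition whose first factor $\s{x}[i..k]$ has $k > \L[i]$ because $\s{x}[i..\L[i]]$ is strictly smaller than everything that follows it up to $\L[j]$.

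The cleanest route, and the one I would actually carry out, is: let $\s{x}[i..\L[j]] = \s{y_1 y_2 \cdots y_p}$ be its Lyndon decomposition with $\s{y_1}\ge\cdots\ge\s{y_p}$. Because $\s{x}[i..\L[i]]$ is a Lyndon word strictly smaller than the nonempty remainder $\s{x}[\L[i]\+ 1..\L[j]]$ (shown above), the prefix $\s{x}[i..\L[i]]$ must be contained within $\s{y_1}$ — otherwise the decomposition would place a drop at or before $\L[i]$, forcing $\s{y_1}\le \s{x}[i..\L[i]] <$ what follows, which is impossible for the first factor — hence $|\s{y_1}| \ge \s{\lambda}[i]$; but $\s{y_1}$ is a Lyndon word starting at $i$, so by maximality $|\s{y_1}| = \s{\lambda}[i]$ and $\s{y_2\cdots y_p} = \s{x}[\L[i]\+1..\L[j]]$, whose first factor $\s{y_2}$ then satisfies $\s{y_2}\le\s{y_1}=\s{x}[i..\L[i]]$, contradicting $\s{x}[\L[i]\+1..\L[j]] > \s{x}[i..\L[i]]$ (the strict inequality passes to the first factor of a Lyndon decomposition since $\s{y_2}$ is a prefix of $\s{y_2\cdots y_p}$ and prefixes compare $\le$). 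This contradiction establishes that no such crossing $i<j$ exists, i.e. $\L[i]\le j$ or $\L[i]\ge\L[j]$.

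The main obstacle is the bookkeeping around non-strict versus strict lexicographic comparisons when passing between a string, its prefixes, and its suffixes — in particular justifying $\s{x}[\L[i]\+1..\L[j]] > \s{x}[i..\L[i]]$ rigorously (it uses that a proper suffix of a Lyndon word exceeds the word, that a prefix is $\le$ the whole, and transitivity, with attention to the edge case where one string is a prefix of the other). Everything else reduces to the concatenation/decomposition lemma for Lyndon words already used above in the proof of Observation~\ref{obs-endrange}.
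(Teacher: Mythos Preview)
Your overall plan is exactly the paper's: assume the two intervals cross, write $\s{w_1}=\s{x}[i..\L[i]]=\s{u}\s{v}$ and $\s{w_2}=\s{x}[j..\L[j]]=\s{v}\s{v'}$ with overlap $\s{v}=\s{x}[j..\L[i]]$, chain the Lyndon inequalities $\s{w_1}<\s{v}<\s{w_2}<\s{v'}$, and conclude that $\s{w_1}\s{v'}=\s{x}[i..\L[j]]$ is a longer Lyndon factor at~$i$. The paper compresses this into two lines.

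Your concluding step, however, has a real gap. From $\s{y_2}\le\s{y_1}$ together with ``$\s{y_2}$ is a prefix of $\s{y_2}\cdots\s{y_p}>\s{y_1}$'' you cannot derive a contradiction: take $\s{y_1}=\s{y_2}=ab$ and $\s{y_3}=a$, so that $\s{y_2}\s{y_3}=aba>ab=\s{y_1}$ while $\s{y_2}\le\s{y_1}$ (and $ab,ab,a$ really is the Lyndon decomposition of $ababa$). The parenthetical ``prefixes compare~$\le$'' points the wrong way for what you need. The same obstruction blocks your earlier alternatives: knowing only $\s{v'}>\s{w_1}$ as an abstract lexicographic inequality neither forces the first Lyndon factor of $\s{v'}$ to exceed $\s{w_1}$, nor forces the first Lyndon factor of $\s{w_1}\s{v'}$ to be longer than $\s{w_1}$.

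What is missing --- and what you yourself flag as ``the main obstacle'' --- is that the relevant inequality is decided by a genuine letter mismatch, not by a prefix relation. Because $\s{w_1}$ is Lyndon and hence unbordered, its proper suffix $\s{v}$ is not a prefix of $\s{w_1}$, so $\s{w_1}<\s{v}$ is witnessed at some position $k\le|\s{v}|<|\s{w_1}|$ with $\s{w_1}[k]<\s{v}[k]=\s{w_2}[k]$. Likewise $\s{w_2}$ is unbordered, so every proper suffix $\s{t}$ of $\s{w_2}$ (in particular $\s{v'}$ and each of its suffixes) differs from $\s{w_2}$ at some position $\ell\le|\s{t}|$ with $\s{t}[\ell]>\s{w_2}[\ell]$; combining the two mismatches shows that $\s{w_1}<\s{t}$ is decided at position $\min(k,\ell)\le k<|\s{w_1}|$. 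Thus $\s{w_1}$ is not a prefix of any suffix of $\s{v'}$, and a direct check that $\s{w_1}\s{v'}$ is smaller than each of its proper suffixes now goes through without the Lyndon-decomposition detour.
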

\begin{proof}
Suppose two such vectors do intersect.
Then the maximum-length Lyndon words
$\s{w_1} = \s{x}[i..\L[i]]$ and $\s{w_2} = \s{x}[j..\L[j]]$
have a nonempty overlap,
so that we can write
$\s{w_1} = \s{uv},\ \s{w_2} = \s{vv'}$ for some nonempty \s{v}.
But then, by well-known properties of Lyndon words,
$\s{w_1} < \s{v} < \s{w_2} < \s{v'}$,
implying that $\s{w_1v'}$ is a Lyndon word,
contradicting the assumption that \s{w_1} is maximum-length.  \qed
\end{proof}

Expressing a string in terms of its ranges has the same useful
lexorder property that writing it in terms of its letters does:
\begin{obs}
\label{obs-compare}
Suppose strings \s{x} and \s{y} are expressed in terms of their ranges:
$\s{x} = \s{x_1x_2}\cdots\s{x_m},\ \s{y} = \s{y_1y_2}\cdots\s{y_n}$.
Suppose further that for some least integer $r \in 1..\min(m,n)$,
$\s{x_r} \ne \s{y_r}$.
Then $\s{x} < \s{y}$ (respectively, $\s{x} > \s{y}$)
according as $\s{x_r} < \s{y_r}$ (respectively, $\s{x_r} > \s{y_r}$).
\end{obs}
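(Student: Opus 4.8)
The plan is to reduce the comparison of the two range-decomposed strings to a comparison of individual letters, then invoke the standard lexicographic definition. Recall that a range $\s{x_r}$ is, by (\ref{range}), a nondecreasing block of letters, and by (\ref{drop}) consecutive ranges are separated by a strict decrease at the boundary. The key observation is that the range decomposition of a string is \emph{unique}: the positions of the drops are determined by $\s{x}$ alone (a drop occurs at position $p$ iff $\s{x}[p] > \s{x}[p+1]$), so two strings sharing a common prefix of ranges $\s{x_1}\cdots\s{x_{r-1}} = \s{y_1}\cdots\s{y_{r-1}}$ literally agree, letter for letter, on the first $\sum_{s=1}^{r-1}\ell_s$ positions.

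First I would fix the least $r$ with $\s{x_r}\ne\s{y_r}$ and set $p = \sum_{s=1}^{r-1}\ell_s$, so that $\s{x}[1..p] = \s{y}[1..p]$ while $\s{x_r}$ and $\s{y_r}$ differ. Now compare $\s{x_r}$ and $\s{y_r}$ as strings. Let $q$ be the first position at which they differ (if one is a proper prefix of the other, $q$ is one past the shorter range). There are two cases. In the first, both ranges extend to position $q$ with $\s{x_r}[q]\ne\s{y_r}[q]$; then $\s{x}$ and $\s{y}$ first differ at global position $p+q$, and $\s{x}<\s{y}$ iff $\s{x}[p+q]<\s{y}[p+q]$ iff $\s{x_r}[q]<\s{y_r}[q]$ iff $\s{x_r}<\s{y_r}$, as required. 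In the second case, say $\s{x_r}$ is a proper prefix of $\s{y_r}$, so $|\s{x_r}| = \ell < |\s{y_r}|$ and $\s{x_r} < \s{y_r}$; I must show $\s{x} < \s{y}$. Here $\s{x}$ and $\s{y}$ agree through position $p+\ell$, and I compare $\s{x}[p+\ell+1]$ with $\s{y}[p+\ell+1] = \s{y_r}[\ell+1]$. By (\ref{range}) applied to $\s{y_r}$, we have $\s{y_r}[\ell+1] \ge \s{y_r}[\ell] = \s{x_r}[\ell]$. On the $\s{x}$ side, position $p+\ell$ is the last letter of range $\s{x_r}$, so either it is the last position of $\s{x}$ — in which case $\s{x}$ is a proper prefix of $\s{y}$ and $\s{x}<\s{y}$ outright — or it is a drop, giving $\s{x}[p+\ell+1] < \s{x}[p+\ell] = \s{x_r}[\ell] \le \s{y_r}[\ell+1] = \s{y}[p+\ell+1]$, so again $\s{x}<\s{y}$. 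The case $\s{y_r}$ a proper prefix of $\s{x_r}$ is symmetric.

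The main obstacle — really the only subtlety — is the prefix case just handled: a shorter range can be a prefix of a longer one, and the ordinary ``shorter prefix is smaller'' rule must be checked to remain consistent with the letter-level comparison of $\s{x}$ and $\s{y}$ across the range boundary. The drop condition (\ref{drop}) is exactly what makes this work: the letter following a completed range strictly decreases, which is ``smaller'' than the nondecreasing continuation forced inside the longer range by (\ref{range}). Once this case is dispatched, the remaining argument is a routine unwinding of the definition of lexicographic order.
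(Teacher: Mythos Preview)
Your proof is correct and follows essentially the same route as the paper: a case split on whether the first differing range pair $\s{x_r},\s{y_r}$ disagree at some position (immediate) or one is a proper prefix of the other, with the prefix case handled via the drop condition (\ref{drop}) together with the monotonicity (\ref{range}). You are in fact slightly more careful than the paper, which cites only (\ref{drop}) in the prefix case even though (\ref{range}) is also needed to compare $\s{x_r}[\ell]$ with $\s{y_r}[\ell\!+\!1]$.
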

\begin{proof}
If $\s{x_r} < \s{y_r}$, then either
\begin{itemize}
\item[$(a)$]
\s{x_r} is a nonempty proper prefix of \s{y_r}; or
\item[$(b)$]
there is some least position $j$ such that $\s{x_r}[j] < \s{y_r}[j]$.
\end{itemize}
In case (a), if $r = m$, then \s{x} is actually a prefix of \s{y},
so that $\s{x} < \s{y}$,
while if $r < m$, then by (\ref{drop}),
$\s{x_{r+1}}[1] < \s{y_r}[|\s{x_r}|\+ 1]$, and again $\s{x} < \s{y}$.
In case (b) the result is immediate.
The proof for $\s{x_r} > \s{y_r}$ is similar.  \qed
\end{proof}

\section{Basic Algorithms}
\label{sect-folklore}
Here we outline three algorithms
for which no clear exposition is available in the literature.
We remark that the Lyndon array computation is equivalent to ``Lyndon bracketing'',
for which an $\O(n^2)$ algorithm has been described \cite{SR03}.

\subsection{Folklore --- Iterated MaxLyn}
\label{subsect-maxlyn}
For a string \s{x} of length $n$,
recall that the \itbf{prefix table} $\pi[1..n]$
is an integer array in which for every $i \in 1\dd n$,
$\pi[i]$ is the length of the longest substring beginning at position $i$
of \s{x} that matches a prefix of \s{x}.
Given a nonempty string \s{x} on alphabet $\Sigma$, let us define $\s{x'} = \s{x}\$$,
where the sentinel $\$ < \mu$ for every letter $\mu \in \Sigma$.

\begin{obs}
\label{obs2}
\s{x} is a Lyndon word if and only if for every $i \in 2\dd n$,
$\s{x'}[1+k] < \s{x'}[i+k]$, where $k = \pi[i]$.
\end{obs}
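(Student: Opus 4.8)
The plan is to prove both directions by relating the prefix-table entry $\pi[i]$ to the first position where the rotation $R_{i-1}(\s{x})$ (equivalently, the suffix $\s{x}[i..n]$) diverges from $\s{x}$, and then invoking the characterization of a Lyndon word as a string strictly smaller than all of its proper suffixes. First I would recall the standard fact that $\s{x}$ is a Lyndon word if and only if $\s{x} < \s{x}[i..n]$ for every $i \in 2\dd n$ (strict inequality, and note that with the sentinel appended no suffix can be a prefix of another, so "$\le$" collapses to "$<$"); this is the form of the Lyndon condition most convenient here because it ranges over suffixes rather than rotations. The role of $\s{x'} = \s{x}\$$ is exactly to make the comparison $\s{x}$ versus $\s{x}[i..n]$ decidable by a single mismatching character: since $\$$ is strictly smaller than every real letter, $\s{x}[i..n]$ can never be a prefix of $\s{x}$, so the two strings must disagree at some position.

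The key step is the identification $k = \pi[i]$ is precisely the length of the longest common prefix of $\s{x'}$ and $\s{x'}[i..]$, whence $\s{x'}[1+k]$ and $\s{x'}[i+k]$ are the first pair of characters at which they differ (one of which may be the sentinel, if $i+\pi[i]-1 = n$, i.e. the suffix is a prefix of $\s{x}$ but shorter). With this in hand the equivalence is immediate character-by-character: $\s{x} < \s{x}[i..n]$ as strings over $\Sigma$ exactly when $\s{x'} < \s{x'}[i..]$, which by the definition of lexicographic order and the fact that they first differ at offset $k$ holds exactly when $\s{x'}[1+k] < \s{x'}[i+k]$. So for the forward direction, assume $\s{x}$ is a Lyndon word; then for each $i \in 2\dd n$ we have $\s{x} < \s{x}[i..n]$, hence $\s{x'}[1+k] < \s{x'}[i+k]$ with $k = \pi[i]$. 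Conversely, if $\s{x'}[1+k] < \s{x'}[i+k]$ for every $i \in 2\dd n$ with $k = \pi[i]$, then $\s{x} < \s{x}[i..n]$ for every such $i$, so $\s{x}$ is strictly smaller than all of its proper nonempty suffixes, which is the suffix-characterization of a Lyndon word; in particular $\s{x}$ is primitive (a repetition $\s{u}^e$ has $\s{u}^{e-1}\s{u} $ as a suffix equal to a prefix-extension, violating strictness), so $\s{x}$ is a genuine Lyndon word.

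The one point requiring a little care — and the place I would expect a referee to look hardest — is the boundary case where the longest common prefix runs off the end of the shorter string, i.e. $\pi[i] = n - i + 1$ so that $\s{x}[i..n]$ is a prefix of $\s{x}$. Here $\s{x'}[i+k] = \s{x'}[n+1] = \$$, which is smaller than $\s{x'}[1+k] \in \Sigma$, so the condition $\s{x'}[1+k] < \s{x'}[i+k]$ correctly fails; and indeed in this situation $\s{x}[i..n]$ being a proper prefix of $\s{x}$ means $\s{x}[i..n] < \s{x}$, so $\s{x}$ is not Lyndon — the two sides of the equivalence agree. This is exactly why the sentinel is introduced: without it the comparison at the end-of-string would be ill-defined, whereas with it the single inequality $\s{x'}[1+k] < \s{x'}[i+k]$ uniformly captures "$\s{x}$ is lexicographically less than its $i$-th suffix" for all $i$. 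The remaining verifications (that $\pi$ as defined on $\s{x}$ and on $\s{x'}$ agree on $1\dd n$ for the relevant entries, and the lexorder bookkeeping) are routine.
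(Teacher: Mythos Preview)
Your argument is correct. The paper itself states this observation without proof, so there is no original argument to compare against; your write-up supplies exactly the standard derivation one would expect. The identification of $k=\pi[i]$ with the length of the longest common prefix of $\s{x'}$ and its $i$th suffix, together with the suffix characterization of Lyndon words, is the natural route, and you handle the boundary case $\pi[i]=n-i+1$ (where the sentinel is hit) cleanly. One cosmetic remark: your parenthetical about primitivity is slightly garbled (``$\s{u}^{e-1}\s{u}$ as a suffix equal to a prefix-extension'' is not quite the right phrasing), though the intended point---that strict inequality with every proper suffix forces primitivity---is correct and standard.
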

This result forms the basis of the algorithm
given in Figure~\ref{fig-maxlyn} that computes
the length $\max \in 1\dd n-j +1$
of the longest Lyndon factor at a given position $j$ in $\s{x}[1..n]$.
Its efficiency is a consequence of the instruction
$i \la i+k+1$ that skips over positions in the range $i+1\dd i+k-1$,
effectively assuming that for every position $i^*$
in that range, $i^*+\pi[i^*] \le i\+ k$.
Lemma~\ref{lemm-maxlyn}, given in Appendix 1, justifies this assumption.
Simply repeating MaxLyn at every position $j$ of \s{x}
gives a simple, fast $\mathcal{O}(n^2)$ time and $\mathcal{O}(1)$ additional space
algorithm to compute $\s{\lambda}_{\s{x}}$.

\begin{figure}[ht]
%
{\leftskip=2.5cm\obeylines\sfcode`;=3000
\bproc MaxLyn$(\s{x}[1\dd n],j,\Sigma,\prec)\ :\ integer$
$i \la j + 1;\ max\la 1$
\bwhile $i \le n$ \bdo
\q $k \la 0$
\q \bwhile $\s{x'}[j+k] = \s{x'}[i+k]$ \bdo
\qq $k \la k+1$
\q \bif $\s{x'}[j+k] \prec \s{x'}[i+k]$ \bthen
\qq $i \la i+k+1;\ max \la i-1$
\q \belse
\qq \breturn max
}
\caption{Algorithm MaxLyn}
\label{fig-maxlyn}
\end{figure}
Recent work on the prefix table
\cite{BKS13,CRSW15} has confirmed its importance as a data structure
for string algorithms.
In this context it is interesting to find that Lyndon words \s{x}
can be characterized in terms of $\pi_{\s{x}}$:
\begin{obs}
\label{obs-prefix}
Suppose $\s{x} = \s{x}[1\dd n]$ is a string on alphabet $\Sigma$
such that $\s{x}[1]$ is the least letter in \s{x}.
Then \s{x} is a Lyndon word over $\Sigma$ if and only if
for every $i \in 2\dd n$,
\begin{itemize}
\item[$(a)$]
$i + \pi_{\s{x}}[i] < n+1$; and
\item[$(b)$]
for every $j \in i+1\dd i+\pi_{\s{x}}[i]-1$,
$j + \pi_{\s{x}}[j] \le i + \pi_{\s{x}}[i]$.
\end{itemize}
\end{obs}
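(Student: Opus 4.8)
The plan is to prove the two directions separately, leaning on Observation~\ref{obs2} as the bridge between the two characterizations. Recall that Observation~\ref{obs2} says \s{x} is a Lyndon word iff for every $i \in 2\dd n$ we have $\s{x'}[1+k] < \s{x'}[i+k]$ with $k = \pi_{\s{x}}[i]$, where $\s{x'} = \s{x}\$$ and $\$$ is a sentinel strictly below every letter of $\Sigma$. The first task is to translate the single inequality $\s{x'}[1+k] < \s{x'}[i+k]$ on the sentinel-extended string into conditions (a) and (b) on the ordinary prefix table $\pi_{\s{x}}$. The key elementary fact is that $k = \pi_{\s{x}}[i]$ records the length of the longest match between $\s{x}[i\dd n]$ and the prefix of \s{x}, so $\s{x}[1+k] = \s{x}[i+k]$ is impossible by maximality; either position $i+k$ runs past the end of \s{x} (i.e.\ $i+k = n+1$, so $\s{x'}[i+k] = \$$), or both $\s{x'}[1+k]$ and $\s{x'}[i+k]$ are genuine letters that differ. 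In the first case $\s{x'}[i+k] = \$ < \s{x'}[1+k]$, which \emph{violates} the Lyndon condition; hence for \s{x} to be Lyndon we must have $i + \pi_{\s{x}}[i] < n+1$, which is exactly (a). Given (a), the Observation~\ref{obs2} inequality becomes the genuine-letter comparison $\s{x}[1+k] < \s{x}[i+k]$.

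Next I would connect the genuine-letter comparisons to condition (b). The hypothesis that $\s{x}[1]$ is the least letter of \s{x} is what makes this work: it guarantees that whenever $\s{x}[1+k] < \s{x}[i+k]$ fails it fails "by equality at the boundary" in a controlled way, and more importantly it is the standard hypothesis under which a string is Lyndon iff it is smaller than all of its proper suffixes that start with the letter $\s{x}[1]$. The clean way to handle (b) is to observe that for $j \in i+1 \dd i+\pi_{\s{x}}[i]-1$, the suffix $\s{x}[j\dd n]$ begins with a copy of $\s{x}[j-i+1\dd \pi_{\s{x}}[i]]$, i.e.\ with a proper suffix of the prefix $\s{x}[1\dd \pi_{\s{x}}[i]]$, so $\pi_{\s{x}}[j]$ cannot extend past where the outer match ends without forcing a prefix to be periodic in a way incompatible with primitivity. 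Concretely, if $j + \pi_{\s{x}}[j] > i + \pi_{\s{x}}[i]$ for some such $j$, one derives a shorter period of the prefix $\s{x}[1\dd i+\pi_{\s{x}}[i]-1]$ that is inconsistent with \s{x} being a Lyndon word (a Lyndon word is primitive, hence its prefixes cannot be "over-periodic" in this sense); conversely, if (a) and (b) both hold, one checks the Observation~\ref{obs2} condition position by position, using (b) to rule out the failure mode where some interior position $j$ would have produced a longer match than the skip from $i$ to $i+\pi_{\s{x}}[i]+1$ anticipates. This is essentially the correctness argument already promised for the skip instruction $i \la i+k+1$ in MaxLyn (Lemma~\ref{lemm-maxlyn} in Appendix~1), repackaged as a static statement about $\pi_{\s{x}}$; I would cite or reprove that lemma here.

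For the converse direction I would argue the contrapositive: if \s{x} is not a Lyndon word, then either $\s{x} = \s{u}^e$ is a repetition, or \s{x} has a proper suffix $\s{x}[j\dd n] < \s{x}$. In the repetition case, taking $i = |\s{u}|+1$ gives $\pi_{\s{x}}[i] = n - i + 1 = (e-1)|\s{u}|$, so $i + \pi_{\s{x}}[i] = n+1$ and (a) fails. In the non-repetition case there is a smallest proper suffix, which (since $\s{x}[1]$ is the least letter) must begin with $\s{x}[1]$ and must share a proper prefix with \s{x} of some length $k = \pi_{\s{x}}[i]$ before dropping below it; then $\s{x}[1+k] > \s{x}[i+k]$ (both genuine letters, by minimality of that suffix one can arrange $i+k \le n$), so (a) may hold but the interior condition forcing the right comparison fails, and tracing which $j$ witnesses a long match shows (b) is violated — or directly, Observation~\ref{obs2} fails, which by the forward direction already established means (a) or (b) fails. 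Honestly, the cleanest writeup may avoid the case split entirely: prove "(a) and (b) $\iff$ the Observation~\ref{obs2} condition holds for all $i$" as a pure statement about prefix tables, and then invoke Observation~\ref{obs2} as a black box for both directions.

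The main obstacle I anticipate is the interior condition (b): showing that the pointwise prefix-table inequalities $j + \pi_{\s{x}}[j] \le i + \pi_{\s{x}}[i]$ are equivalent to the absence of the "skipped-over longer match" failure mode requires the same delicate combinatorics on periods of prefixes that underlies the correctness of MaxLyn's skip step. I would isolate this as a lemma (or reuse Lemma~\ref{lemm-maxlyn}), state it as: for a string whose first letter is least, no interior position $j \in i+1\dd i+\pi_{\s{x}}[i]-1$ has $j+\pi_{\s{x}}[j] > i+\pi_{\s{x}}[i]$ unless a prefix of \s{x} is a nontrivial repetition, and prove it via the fine periodicity lemma. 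Everything else — the $n+1$ boundary analysis for (a), and the reduction of both directions to Observation~\ref{obs2} — is routine.
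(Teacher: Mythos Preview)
The paper does not actually prove Observation~\ref{obs-prefix}; it is stated and then the text moves on. So there is no ``paper's own proof'' to compare against.

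More to the point, the converse (``if'') direction of the observation is false as stated, and your plan for that direction therefore cannot succeed. Take $\s{x}=acab$ with $n=4$: $\s{x}[1]=a$ is the least letter; the prefix table is $\pi[2]=0$, $\pi[3]=1$, $\pi[4]=0$. Condition~(a) holds at every $i\in 2..4$ since $i+\pi[i]\le 4<5$, and condition~(b) is vacuous everywhere because each $\pi[i]\le 1$. Yet $acab$ is not Lyndon: the proper suffix $ab$ satisfies $ab<acab$. Equivalently, Observation~\ref{obs2} fails at $i=3$ because $\s{x'}[2]=c>\s{x'}[4]=b$. So (a) and~(b) together are strictly weaker than the Observation~\ref{obs2} condition: they constrain \emph{where} the first mismatch $\s{x}[1+\pi[i]]\neq\s{x}[i+\pi[i]]$ occurs but say nothing about its \emph{direction} when $\pi[i]\le 1$. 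Your proposed bridge ``(a)$\wedge$(b) $\Leftrightarrow$ Obs~\ref{obs2}'' is therefore not a theorem, and the hand-wavy step ``using (b) to rule out the failure mode'' has no content in this regime.

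Your forward direction, on the other hand, is fine and is exactly what the paper's own Lemma~\ref{lemm-maxlyn} delivers: if \s{x} is Lyndon then (a) holds because Lyndon words are border-free, and (b) is literally the conclusion of Lemma~\ref{lemm-maxlyn}. So the honest outcome here is that Observation~\ref{obs-prefix} should be read as a necessary condition only; the biconditional needs an additional clause (for instance, $\s{x}[1+\pi[i]]<\s{x}[i+\pi[i]]$ whenever $i+\pi[i]\le n$) to become correct.
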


\subsection{Recursive Duval Factorization: Algorithm RDuval}
Rather than independently computing the maximum-length Lyndon factor
at each position $i$, as MaxLyn does, Algorithm RDuval
recursively computes the Lyndon decomposition into
maximum factors, at each step taking advantage of the fact that
$\L[i]$ is known for the first position $i$ in each factor,
then recomputing with the first letters removed.
By Observation~\ref{obs-decomp},
whenever $\s{x} = \s{x}[1..n]$ is a Lyndon word,
we know that $\L[1] = n$.
Thus computing the Lyndon decomposition $\s{x} = \s{w_1w_2}\cdots\s{w_k}$,
$\s{w_1} \ge \s{w_2} \ge \cdots \ge \s{w_k}$,
allows us to assign $\s{\lambda}[i_j] = |\s{w_j}|$,
where $i_j$ is the first position of \s{w_j}, $j = 1,2,\ldots,k$.

Algorithm RDuval applies this strategy recursively,
by assigning $\s{\lambda}[i_j] \la |\s{w_j}|$,
then removing the first letter $i_j$ from each $\s{w_j}$ to form $\s{w'_j}$,
to which the Lyndon decomposition is applied in the next recursive step.
This process continues until each Lyndon word is reduced to a single letter.

The asymptotic time required for RDuval is bounded above by
$n$ times the maximum depth of the recursion, thus $O(n^2)$ in the worst case --- 
consider, for example, the string $\s{x} = a^{n-1}b$.
However, to estimate expected behaviour, we can make use of a result
of Bassino {\it et al.} \cite{BassinoCN05}.
Given a Lyndon word \s{w}, they call $\s{w} = \s{uv}$
the \itbf{standard factorization} of \s{w} if \s{u} and \s{v}
are both Lyndon words and \s{v} is of maximum size.
They then show that if \s{w} is a binary string ($\Sigma = \{a,b\}$),
the average length of \s{v} is asymptotically $3|\s{w}|/4$.
Thus each recursive application of RDuval yields a left Lyndon factor
of expected length $|\s{w}|/4$
and a remainder of length $3|\s{w}|/4$ to be factored further.
It follows that the expected number of recursive calls
of RDuval is $\O(\log_{4/3} n)$.  Hence
\begin{lemm}
\label{lemm-RD}
On binary strings
RDuval executes in $O(n\log_{4/3} n)$ time on average.
\end{lemm}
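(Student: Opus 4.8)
The plan is to make the informal argument sketched in the paragraph preceding the statement fully rigorous by setting up a precise recurrence for the expected work. First I would fix notation: for a random binary Lyndon word $\s{w}$ of length $N$, let $T(N)$ denote the expected total time spent by RDuval on input $\s{w}$, where the work at the top level of the recursion is the cost of one Duval factorization. Duval's algorithm computes the Lyndon decomposition $\s{w} = \s{w_1w_2}\cdots\s{w_k}$ in $\O(N)$ time and $\O(1)$ extra space, so the top-level cost is $cN$ for some constant $c$; then RDuval strips the first letter of each $\s{w_j}$ and recurses on the concatenation $\s{w'_1w'_2}\cdots\s{w'_k}$, which has total length $N-k \le N-1$. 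The subtlety is that the pieces $\s{w'_j}$ are no longer Lyndon words, so one must observe (using Observation~\ref{obs-decomp}, or simply that RDuval re-factors at each level) that the recursion depth, not the branching structure, is what controls the running time: each of the at most $\log$-many levels costs $\O(N)$, so $T(N) \le c N \cdot D(N)$ where $D(N)$ is the expected recursion depth.

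Next I would bound $D(N)$. The key input is the theorem of Bassino, Cl\'ement and Nicaud \cite{BassinoCN05}: for the standard factorization $\s{w} = \s{uv}$ of a random binary Lyndon word, $\mathbb{E}|\s{v}| \sim \tfrac{3}{4}|\s{w}|$, hence $\mathbb{E}|\s{u}| \sim \tfrac14|\s{w}|$. One recursive step of RDuval removes one letter from the front of each maximal Lyndon factor; I would argue that the length of the \emph{longest} factor remaining after one step is at most the length of the right factor $\s{v}$ in the standard factorization (plus lower-order terms), since the leftmost maximal Lyndon factor of $\s{w'}$ is obtained from $\s{v}$ by the removal of one letter and the other factors lie inside $\s{v}$ as well. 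Iterating, after $t$ steps the expected length of the largest surviving factor is about $(3/4)^t N$, so the recursion terminates once this drops below $1$, i.e. after $t = \log_{4/3} N + \O(1)$ steps; thus $D(N) = \O(\log_{4/3} N)$.

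Finally I would combine the two bounds: $T(N) \le cN\cdot D(N) = \O(N\log_{4/3} N)$, and since a general binary string of length $n$ decomposes (Observation~\ref{obs-decomp}) into Lyndon words whose lengths sum to $n$, and $\sum_h N_h \log_{4/3} N_h \le (\log_{4/3} n)\sum_h N_h = n\log_{4/3} n$, the same bound holds for arbitrary binary inputs. This gives the claimed $\O(n\log_{4/3} n)$ average-case running time.

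The main obstacle is the second step: turning ``the right factor has expected length $3|\s{w}|/4$'' into a clean bound on the expected recursion depth. Bassino {\it et al.} give the expectation of $|\s{v}|$ for a \emph{single} level under the uniform distribution on Lyndon words of a fixed length, but after one peel the surviving factors are not uniformly random Lyndon words, so the recursion is not literally a product of independent $3/4$-shrinkage steps. I would handle this by tracking only the maximum factor length as a (super)martingale-like quantity and invoking linearity of expectation level by level, being careful that conditioning on the outcome of earlier levels does not blow up the expected shrinkage factor — or, more conservatively, by noting that it suffices for the argument that the expected depth is $\O(\log n)$, which follows as long as each level shrinks the largest factor by a constant factor in expectation; the precise constant $4/3$ is then a consequence of the Bassino--Cl\'ement--Nicaud estimate applied at the first level and an inductive comparison at subsequent levels.
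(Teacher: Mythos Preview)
Your approach is essentially the paper's: the paper's entire argument is the informal paragraph preceding the lemma --- invoke Bassino--Cl\'ement--Nicaud to get expected right-factor length $3|\s{w}|/4$, hence $\O(\log_{4/3} n)$ recursion depth, times $\O(n)$ work per level. Your write-up is in fact more careful than the paper's, and the distributional subtlety you flag (that the pieces after one peel are no longer uniform Lyndon words) is simply not addressed there.
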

\begin{examp}
For
\begin{equation*}
\label{ex1}
 \begin{array}{cccccccccccccc}
& & \scriptstyle 1 & \scriptstyle 2 & \scriptstyle 3 & \scriptstyle 4 & \scriptstyle 5 & \scriptstyle 6 & \scriptstyle 7 & \scriptstyle 8 & \scriptstyle 9 & \scriptstyle 10 & \scriptstyle 11 & \scriptstyle 12 \\ 
\s{x} & = & a & a & b & a & a & b & b & a & b & b & a & b \\ 
\s{\lambda} & = & 12 & 2 & 1 & 9 & 3 & 1 & 1 & 3 & 1 & 1 & 2 & 1
\end{array}
\end{equation*}
the factors considered are first 1--12, then
\begin{itemize}
\item[$\bullet$]
2--3 and 4--12 in the  first level of recursion;
\item[$\bullet$]
3, 5--7, 8--10 and 11--12 in the second level;
\item[$\bullet$]
$6,7,9,10,12$ in the third level.
\end{itemize}
Positions are assigned as follows:
$\s{\lambda}[1] \la 12; \s{\lambda}[2] \la 2, \s{\lambda}[4] \la 9;
\s{\lambda}[3] \la 1, \s{\lambda}[5] \la 3, \s{\lambda}[8] \la 3, \s{\lambda}[11] \la 2;
\s{\lambda}[6] \la 1, \s{\lambda}[7] \la 1, \s{\lambda}[9] \la 1,
\s{\lambda}[10] \la 1, \s{\lambda}[12] \la 1$.
\end{examp}

\subsection{NSV Applied to the Inverse Suffix Array}
\label{subsect-nsvisa}
The idea of the ``next smaller value'' (NSV) array for a given array (string) \s{x}
has been proposed in various forms and under various names
\cite{AGKR02,FMN08,OG11,GB13}.
\begin{defn}[Next Smaller Value]
Given an array $\s{x}[1..n]$ of ordered values,
$\NSV = \NSV_{\s{x}}[1..n]$ is the \itbf{next smaller value array} of \s{x}
if and only if for every $i \in 1..n$, $\NSV[i] = j$, where
\begin{itemize}
\item[$(a)$]
for every $h \in 1..j\- 1,\ \s{x}[i] \le \s{x}[i\+ h]$; and
\item[$(b)$]
either $i\+ j = n\+ 1$ or $\s{x}[i] > \s{x}[i\+ j]$.
\end{itemize}
\end{defn}
\begin{examp}
\begin{equation*}
\label{ex1}
\begin{array}{rccccccccc}
\scriptstyle 1 & \scriptstyle 2 & \scriptstyle 3 & \scriptstyle 4 & \scriptstyle 5 & \scriptstyle 6 & \scriptstyle 7 & \scriptstyle 8 & \scriptstyle 9 & \scriptstyle 10 \\ 
\s{x} = 3 & 8 & 7 & 10 & 2 & 1 & 4 & 9 & 6 & 5 \\ 
\NSV_{\s{x}} = 4 & 1 & 2 & 1 & 1 & 5 & 4 & 1 & 1 & 1
\end{array}
\end{equation*}
\end{examp}
As shown in various contexts in \cite{GB13},
$\NSV_{\s{x}}$ can be computed in $\Theta(n)$ time using a stack.
Our main observation here, touched upon in \cite{HR03},
is that $\s{\lambda}_{\s{x}}$ can be computed merely by applying NSV
to the inverse suffix array $\ISA_{\s{x}}$.
Proof of this claim can be found in Appendix 2;
here we present the very simple $\Theta(n)$-time, $\Theta(n)$-space algorithm for this calculation:
\begin{figure}[ht]

{\leftskip=3.4cm\obeylines\sfcode`;=3000
\bproc NSVISA$(\s{x}[1\dd n])\ :\ \s{\lambda}_{\s{x}}[1\dd n]$
Compute $SA_{\s{x}}$\ \ \ \ (see \cite{NZC09,PST07})
Compute $\ISA_{\s{x}}$ from $\SA_{\s{x}}$ in place\ \ \ \ (see \cite{PST07})
$\s{\lambda}_{\s{x}} \la$ NSV$(\ISA_{\s{x}})$ (in place)
}
\caption{Apply NSV to $\ISA_{\s{x}}$}
\label{fig-3line}
\end{figure}

\section{Elementary Computation of $\s{\lambda}_{\s{x}}$ Using Ranges}
\label{sect-newalgs}

In this section we describe an approach to the computation of $\s{\lambda}_{\s{x}}$
that applies a variant of the NSV idea to the ranges of \s{x}.
Figure~\ref{nsv*} gives pseudocode for Algorithm $\NSV^*$
that uses the NSV stack ACTIVE to compute $\s{\lambda}$.
The processing identifies ranges in a single left-to-right scan
of \s{x}, making use of two range comparison routines, COMP and MATCH.
COMP compares adjacent individual ranges \s{x_r} and \s{x_{r+1}}, returning $\delta_1 = -1,0,+1$
according as $\s{x_r} < \s{x_{r+1}}$, $\s{x_r} = \s{x_{r+1}}$, $\s{x_r} > \s{x_{r+1}}$.
MATCH similarly returns $\delta_2$ for adjacent {\it sequences} of ranges; that is,
\begin{eqnarray*}
\s{X_r} &=& \s{x_rx_{r+1}\cdots x_{r+s}}, \mbox{ for some } s \ge 1; \\ 
\s{X_{r+s+1}} &=& \s{x_{r+s+1}x_{r+s+2}\cdots x_{r+s+t}}, \mbox{ for some } t \ge 1.
\end{eqnarray*}


Algorithm $\NSV^*$ is based on the idea encapsulated in Lemma~\ref{maxlyn-suf}
of Appendix 2,
the main basis of the correctness of Algorithm NSVISA.
We process \s{x} from left to right, using a stack ACTIVE initialized with index 1.
At each iteration, the top of the stack (say, $j$)
is compared with the current index (say, $i$).
In particular, we need to compare $\s{s}_{\s{x}}(i)$ with $\s{s}_{\s{x}}(j)$,
where $\s{s}_{\s{x}}(i) \equiv \s{x}[i..n]$.
As long as $\s{s}_{\s{x}}(i) \succeq \s{s}_{\s{x}}(j)$,
$\NSV^*$ pushes the current index and continues to the next.
When $\s{s}_{\s{x}}(i) \prec \s{s}_{\s{x}}(j)$,
it pops the stack and puts
appropriate values in the corresponding indices of $\s{\lambda}_{\s{x}}$.
As noted above, especially Observations~\ref{obs-decomp}--\ref{obs-monge},
ranges are employed to expedite these suffix comparisons.

\begin{figure}[ht]
{\leftskip=1.0cm\obeylines\sfcode`;=3000
\bproc NSV* $(\s{x},\s{\lambda})$
$\nextequal \la 0^n;\ \period \la 0^n$
\bpush$(\ACTIVE) \la 1$
\com{$\s{x}[n\+ 1] = \$$, a letter smaller than any in $\Sigma$.}
\bfor $i \la 2$ \bto $n\+ 1$ \bdo
\q $\prev \la 0;\ j \la$ \bpeek$(\ACTIVE)$
\com{COMP compares suffixes specified by $i,j$ of two ranges.}
\q $\delta_1 \la \COMP(\s{x}[j],\s{x}[i]);\ \delta_2 \la 1$
\qq \bwhile ($\delta_1 \ge 0$ \band $\delta_2 > 0$) \bdo
\qq\q \bif $\delta_1 = 0$ \bthen  $\delta_2 \la \MATCH(\s{x}[j],\s{x}[i])$
\qq\q \bif $\delta_2 > 0$ \bthen
\qq\q\q \bif $\prev = 0$ \bor $\nextequal[j] \neq \prev$ \bthen $\s{\lambda}[j] \la i\m j$
\qq\q\q \belse
\qq\qq\q $\s{\lambda}[j] \la \offset \la \prev\m j$
\qq\qq\q \bif $\period[\prev]=0$ \bthen
\qq\qq\q\q \bif $\s{\lambda}[\prev] > \offset$ \bthen
\qq\qq\qq\q $\s{\lambda}[j] \la \s{\lambda}[j]\+ \s{\lambda}[\prev]$
\qq\qq\q \belse
\qq\qq\q\q \bif $\nextequal[j]=\prev$ \band $\offset \neq \s{\lambda}[\prev]$ \bthen
\qq\qq\qq\q $\s{\lambda}[j] \la \s{\lambda}[j]\+ \period[\prev]$
\qq\qq\q \bif $\s{\lambda}[\prev] = \offset$ \bthen
\qq\qq \com{Current position is a part of periodic substring}
\qq\qq\qq \bif $\period[\prev] = 0$ \bthen
\qq\qq\qq\q $\period[j] \la \period[\prev]+2\times \offset$
\qq\qq\q\q \belse
\qq\qq\qq\q $\period[j] \la \period[\prev]\+ \offset$
\qq\q\q \bpop$(\ACTIVE)$
\qq\q\q $\prev \la j;\ j \la \peek(\ACTIVE)$
\qq\qq \com{Empty stack implies termination.}
\qq\q\q \bif $j = 0$ \bthen \EXIT
\qq\q\q $\delta_1 \la \COMP(\s{x}[j],\s{x}[i])$
\com{Finished processing $i$ --- it goes to stack.}
\q \bif $\delta_2 = 0$ \bthen  $\nextequal[j] \la i$
\q \bpush$(\ACTIVE) \la i$
}
\caption{Computing $\s{\lambda}_x$ using modified NSV}
\label{nsv*}
\end{figure}

Two auxiliary arrays,
\nexteq\ and \per,
are required to handle situations in which MATCH
finds that a suffix of a previous range at position $j$
equals the current range at position $i$.
Thus, when $\delta_2 = 0$, the algorithm assigns
$\nextequal[j] \la i$ before $i$ is pushed onto ACTIVE.
Then when a later MATCH yields $\delta_2 = 0$,
the value of \per\ --- that is, the extent of the following periodicity --- may need to be set or adjusted,
as shown in the following example:

\begin{equation*}
\label{ex1}
\begin{array}{ccccccccccccccccc}
& & \scriptstyle 1 & \scriptstyle 2 & \scriptstyle 3 & \scriptstyle 4 & \scriptstyle 5 & \scriptstyle 6 & \scriptstyle 7 & \scriptstyle 8 & \scriptstyle 9 & \scriptstyle 10 & \scriptstyle 11 & \scriptstyle 12 & \scriptstyle 13 & \scriptstyle 14 & \scriptstyle 15 \\ 
\s{x} & = & a & a & a & b & a & a & b & a & a & b & a & a & b & a & b \\ 
\nextequal & = & 0 & 5 & 0 & 0 & 8 & 0 & 0 & 11 & 0 & 0 & 0 & 14 & 0 & 0 & 0 \\ 
\period & = & 0 & 12 & 0 & 0 & 9 & 0 & 0 & 6 & 0 & 0 & 0 & 4 & 0 & 0 & 0
\end{array}
\end{equation*}

A straightforward implementation of COMP and MATCH could require
a number of letter comparisons equal to the length of the shorter
of the two sequences of ranges being matched.
However, by performing $\Theta(n)$-time preprocessing,
we can compare two ranges in $\O(\sigma)$ time, where $\sigma = |\Sigma|$
is the alphabet size.
Given $\Sigma = \{\mu_1,\mu_2,\ldots,\mu_{\sigma}\}$,
we define Parikh vectors $P_r[1..\sigma]$,
where $P_r[j]$ is the number of occurrences of $\mu_j$ in range $\s{x_r}$.
Since ranges are monotone nondecreasing in the letters of the alphabet,
it is easy to compute all the $P_r, r = 1,2,\ldots,m$, in linear time
in a single scan of \s{x}.
Similarly, during the processing of each range \s{x_r},
any value $P_{r,j}$, the Parikh vector of the suffix $\s{x_r}[j..\ell_r]$,
can be computed in constant time for each position considered.
Thus we can determine the lexicographical order of any two ranges (or part ranges)
\s{x_r} and \s{x_{r'}} in $\O(\sigma)$ time
rather than time $\O(\max(\ell_r,\ell_{r'}))$.
The variant of $\NSV^*$ that uses Parikh vectors is called P$\NSV^*$;
otherwise NP$\NSV^*$ for Not Parikh.

In Appendix 3 we describe briefly another approach to
this suffix comparison problem, which also
achieves run time $\O(n \log n)$
by maintaining a simple data structure requiring $\O(n \log n)$ space.

Now consider the worst case behaviour of Algorithm $\NSV^*$.
Given the initial string $\s{x_0} = a^hba^hc_0,\ h \ge 1$,
$c_0 > b > a$,
let $\s{x^{(h)}_k} = \s{x_k} = \s{x_{k-1}x^*_{k-1}},\ k = 1,2,\ldots,$
with \s{x^*_{k-1}} identical to \s{x_{k-1}} except in the last position,
where the letter $c_k > c_{k-1}$ replaces $c_{k-1}$.
Then \s{x_k} has length $n = (h\+ 1)m$,
where $m = 2^{k+1}$ is the number of ranges in \s{x_k}.
In Appendix 4 it is shown in Lemma~\ref{lemm-rm}
that $\s{x_k}$
is a worst-case input for Algorithm $\NSV^*$,
which requires $\O(n\log n)$ range matches in such cases.
Since $\PNSV^*$ compares two ranges in $\O(\sigma)$ time,
it therefore requires $\O(\sigma n\log n)$ time
in the worst case, thus $\O(n\log n)$ for constant $\sigma$.
In Appendix 4 we argue that $\NPNSV^*$ is also $\O(n\log n)$
in the worst case.

\section{Experimental Results}
\label{sect-exp}
We have done preliminary tests on the algorithms described above,
including the two variants of $\NSV^*$.
The equipment used was an Intel(R) Core i3 at 1.8GHz and 4GB main memory
under a 64-bit Windows 7 operating system.
Figure~\ref{fig-binary} shows the results of exhaustive tests of the algorithms on all
binary strings of lengths 11--22,
with all but RDuval displaying linear-time behaviour.
MaxLyn and NP$\NSV^*$ are roughly equivalent in time requirement,
with NSVISA several times slower, P$\NSV^*$ perhaps 10 times slower.

We have also tested the linear average-case algorithms on much longer binary strings,
several megabytes in length,
both random and highly periodic \cite{FSS03}.
On random strings, P$\NSV^*$ and NP$\NSV^*$ are comparable in speed
and fastest by a factor of 2 or 3,
while on the periodic strings, MaxLyn has an advantage by approximately
the same margin.
More testing needs to be done,
especially on strings defined on larger alphabets,
but of the current collection,
it appears that the two new $\O(n\log n)$-time algorithms
are the algorithms of choice.
\begin{figure}[htbp]
\centering
\includegraphics*[scale = 0.6]{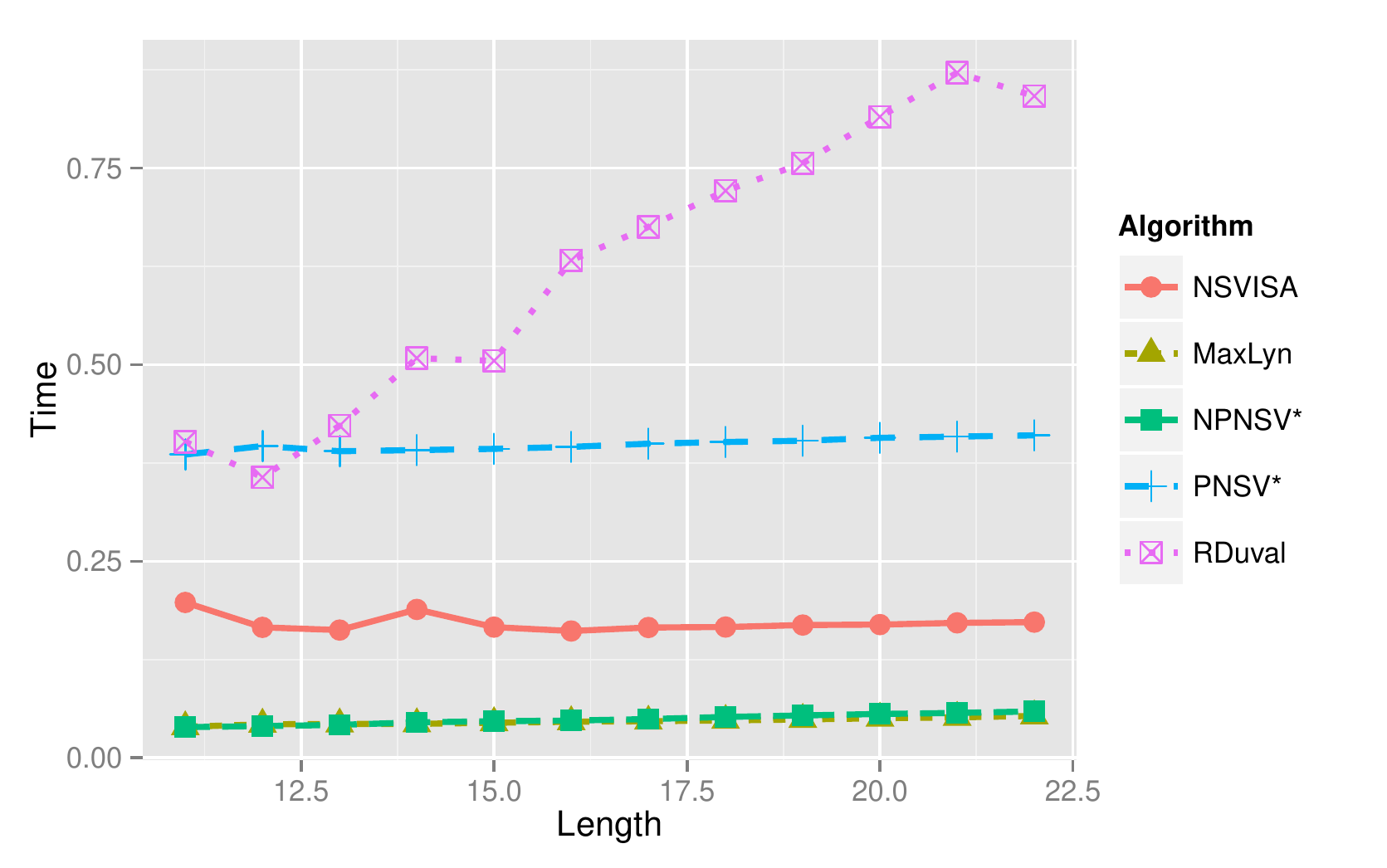}
\caption{Five algorithms compared on all binary strings of lengths $n \in 11..22$: the average processing time for each $n$ is given in $10^{-4}$ seconds.}
\label{fig-binary}
\end{figure}

\section{Future Work}
\label{sect-future}
There is reason to believe \cite{K14}
that the Lyndon array computation is less hard than
suffix array construction.
Thus the authors conjecture that there is a linear-time elementary
algorithm (no suffix arrays) to compute the Lyndon array.

\def\AAIM{International Conference on Algorithmic Aspects in Information \& Management}
\def\AJC{Australasian J.\ Combinatorics}
\def\AWOCA{Australasian Workshop on Combinatorial Algs.}
\def\CPM{Annual Symp.\ Combinatorial Pattern Matching}
\def\COCOON{Annual International Computing \& Combinatorics Conference}
\def\FOCS{IEEE Symp.\ Found.\ Computer Science}
\def\AESA{Annual European Symp.\ on Algs.}
\def\LATA{Internat.\ Conf.\ on Language \& Automata Theory \& Applications}
\def\IWOCA{Internat.\ Workshop on Combinatorial Algs.}
\def\AWOCA{Australasian Workshop on Combinatorial Algs.}
\def\STACS{Symp.\ Theoretical Aspects of Computer Science}
\def\ICALP{Internat.\ Colloq.\ Automata, Languages \& Programming}
\def\IJFCS{Internat.\ J.\ Foundations of Computer Science}
\def\ISAAC{Internat.\ Symp.\ Algs.\ \& Computation}
\def\SWAT{Scandinavian Workshop on Alg.\ Theory}
\def\PSC{Prague Stringology Conf.}
\def\ALG{Algorithmica }
\def\CSUR{ACM Computing Surveys}
\def\DAM{Discrete Applied Math.}
\def\FI{Fundamenta Informaticae}
\def\IPL{Inform.\ Process.\ Lett.\ }
\def\InfComp{Inform.\ \& Computation}
\def\IS{Inform.\ Sciences}
\def\JACM{J.\ Assoc.\ Comput.\ Mach.\ }
\def\JCTA{J.\ Combinatorial Theory, Series A}
\def\CACM{Commun.\ Assoc.\ Comput.\ Mach.\ }
\def\MCS{Math.\ in Computer Science}
\def\SICOMP{SIAM J.\ Computing}
\def\SIDMA{SIAM J.\ Discrete Math.\ }
\def\JCB{J.\ Computational Biology\ }
\def\JA{J.\ Algorithms }
\def\JDA{J.\ Discrete Algorithms }
\def\JALC{J.\ Automata, Languages \& Combinatorics\ }
\def\SODA{ACM-SIAM Symp.\ Discrete Algs.\ }
\def\SPE{Software, Practice \& Experience\ }
\def\TCJ{The Computer Journal}
\def\TCS{Theoret.\ Comput.\ Sci.\ }

\bibliographystyle{plain}
\bibliography{references}

\section*{Appendix 1}
The following result justifies the strategy employed in Algorithm MaxLyn (Figure~\ref{fig-maxlyn}):
\begin{lemm}
\label{lemm-maxlyn}
Suppose that for some position $i$ in a Lyndon word $\s{x}[1..n]$,
$k = \pi[i] \ge 2$.
Then for every $j \in i+1\dd i+k-1$, $\pi[j] \le i+k-j$.
\end{lemm}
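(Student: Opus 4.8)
The plan is to argue by contradiction. Suppose, for some $j\in i+1\dd i+k-1$, that $\pi[j]\ge i+k-j+1$, and set $p:=j-i$, so $1\le p\le k-1$; I will produce a proper suffix of $\s{x}$ that is lexicographically smaller than $\s{x}$, contradicting that $\s{x}$ is a Lyndon word. The case $i=1$ is dismissed immediately: then $k=\pi[1]=n$, and the asserted bound $\pi[j]\le i+k-j=n+1-j$ is just the length of the suffix $\s{x}[j\dd n]$, which already bounds $\pi[j]$. So from here on I assume $i\ge 2$.

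First I would pin down how the match of length $k$ at position $i$ ends. Since $\s{x}$ is a Lyndon word, every proper suffix of $\s{x}$ is lexicographically strictly greater than $\s{x}$, whereas a proper prefix of $\s{x}$ is strictly smaller; hence $\s{x}[i\dd n]$ cannot agree with $\s{x}$ on all of $\s{x}[1\dd n-i+1]$, and so the match cannot run off the end: $i+k\le n$ (and therefore $k+1\le n$). Moreover the first disagreement, occurring between $\s{x}[i+k]$ and $\s{x}[k+1]$, must be oriented $\s{x}[i+k]>\s{x}[k+1]$, again because $\s{x}[i\dd n]>\s{x}$. This orientation fact is the crux of the argument.

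Next I would extract a periodicity of the prefix $\s{x}[1\dd k]$. The window $\s{x}[j\dd i+k-1]$ can be read two ways: from $\pi[i]=k$ (so $\s{x}[i\dd i+k-1]=\s{x}[1\dd k]$) it equals $\s{x}[p+1\dd k]$; from $\pi[j]\ge k-p+1$ (so $\s{x}[j\dd j+(k-p)-1]=\s{x}[1\dd k-p]$, using $j+(k-p)-1=i+k-1$) it equals $\s{x}[1\dd k-p]$. Hence $\s{x}[1\dd k]$ has period $p$, and the same two matches extended one letter further give $\s{x}[i+k]=\s{x}[k-p+1]$. Now consider the proper, nonempty suffix $\s{y}:=\s{x}[p+1\dd n]$. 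By the periodicity, $\s{y}[1\dd k-p]=\s{x}[p+1\dd k]=\s{x}[1\dd k-p]$, so $\s{y}$ and $\s{x}$ agree on their first $k-p$ letters; but at position $k-p+1$ the letter of $\s{y}$ is $\s{x}[k+1]$, while that of $\s{x}$ is $\s{x}[k-p+1]=\s{x}[i+k]>\s{x}[k+1]$, so $\s{y}<\s{x}$ --- the desired contradiction, which establishes $\pi[j]\le i+k-j$.

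I do not anticipate a deep obstacle; the real care goes into the index bookkeeping --- lining up $\s{x}[j\dd i+k-1]$ simultaneously with a suffix of $\s{x}[1\dd k]$ and a prefix of $\s{x}$, checking that $\s{y}$ is long enough to have a $(k-p+1)$-st letter (i.e.\ $n-p\ge k-p+1$, which follows from $k+1\le n$), that $j+(k-p)=i+k\le n$ so the extra-letter identity is legitimate, and that the boundary/sentinel situations are covered. Once those are settled the conceptual content is short: a long internal prefix-match together with a long shifted one would force a period of the prefix whose ``wrap-around'' letter is too large for $\s{x}$ to be minimal among its rotations.
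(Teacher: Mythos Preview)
Your proof is correct and follows essentially the same route as the paper's: both argue by contradiction, extract from $\pi[i]=k$ and the assumed overlength match at $j$ the two alignments of $\s{x}[j\dd i+k-1]$ with $\s{x}[p+1\dd k]$ and $\s{x}[1\dd k-p]$ (your $p=j-i$), use the Lyndon property to orient the mismatch as $\s{x}[k+1]\prec\s{x}[i+k]$, and conclude that the proper suffix beginning at position $p+1=j-i+1$ is lexicographically smaller than $\s{x}$. Your exposition is a bit more explicit --- you name the induced period $p$ of $\s{x}[1\dd k]$, isolate the identity $\s{x}[i+k]=\s{x}[k-p+1]$, and separately dispatch the boundary case $i=1$ (which the paper folds into ``$i+k=n+1$'') --- but the underlying argument is the same.
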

\begin{proof}
The result certainly holds for $i+k = n+1$, so we consider $i+k \le n$.
Assume that for some $j \in i+1\dd i+k-1$,
$\pi[j] > i+k-j$.
It follows that
\begin{equation}
\label{eq1}
\s{x}[1\dd i+k-j+1] = \s{x}[j\dd i+k],
\end{equation}
while $\s{x}[j-i+1\dd k] = \s{x}[j\dd i+k-1]$.
Since $\s{x}$ is Lyndon, therefore $\s{x}[1+k] \prec \s{x}[i+k]$, and so we find that
\begin{equation}
\label{eq2}
\s{x}[j-i+1\dd 1+k] \prec \s{x}[j\dd i+k].
\end{equation}
From (\ref{eq1}) and (\ref{eq2}) we see that
$\s{x}[1..k+1]$ has suffix $\s{x}[j-i+1..k+1]$ satisfying
$\s{x}[j-i+1..k+1] \prec \s{x}[1..i+k-j+1]$,
contradicting the assumption that \s{x} is Lyndon.  \qed
\end{proof}

\section*{Appendix 2}
Here we prove Theorem~\ref{thrm-main} that justifies Algorithm~\ref{fig-3line}:
\begin{thrm}
\label{thrm-main}
For a given string $\s{x} = \s{x}[1..n]$ on alphabet $\Sigma$,
totally order by $\prec$,
let $\ISA = \ISA_{\s{x}}^{\prec}$.
Then for every $i \in 1..n$, the substring $\s{x}[i..j]$
is a longest Lyndon factor with respect to $\prec$
if and only if
\begin{itemize}
\item[$(a)$]
for every $h \in i\+ 1..j$, $\ISA[j] < \ISA[h]$; and
\item[$(b)$]
either $j = n$ or $\ISA[j\+ 1] < \ISA[i]$.
\end{itemize}
\end{thrm}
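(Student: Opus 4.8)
The plan is to prove the equivalence by relating the lexicographic order of suffixes (encoded by $\ISA$) to the Lyndon property of factors, using the classical characterization: $\s{x}[i..j]$ is a Lyndon word if and only if it is strictly smaller than every one of its proper suffixes, and moreover it is a \emph{longest} Lyndon factor starting at $i$ exactly when it cannot be extended. First I would recall the well-known lemma that for positions $i \le h$, the suffix $\s{s}(i) = \s{x}[i..n]$ is lexicographically smaller than $\s{s}(h)$ if and only if $\s{x}[i..h\-1]$ is ``Lyndon-like'' with respect to $h$ — more precisely, that $\s{x}[i..j]$ is a Lyndon word iff $\s{s}(i) \prec \s{s}(h)$ for all $h \in i\+1..j$. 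This translates directly into condition $(a)$ since $\ISA[i] < \ISA[h]$ means exactly $\s{s}(i) \prec \s{s}(h)$; note that condition $(a)$ as stated uses $\ISA[j]$, so a first step is to show that when $\s{x}[i..j]$ is Lyndon, $\ISA[j] = \min\{\ISA[h] : h \in i..j\}$ is impossible unless — wait, actually I need to be careful: the minimum $\ISA$ among positions $i..j$ should be at $i$, not $j$; so I would first verify I have the indices straight and reconcile $(a)$ with the standard statement, possibly re-deriving that for a Lyndon word $\s{x}[i..j]$ the position with smallest suffix-rank among $i..j$ is $i$ itself, while $(a)$ speaks of $h \in i\+1..j$ having rank above $\ISA[j]$ — this is the statement that $\s{s}(j)$, the shortest suffix, is the \emph{second}-smallest, which follows because any Lyndon word's suffixes are all larger than the whole word but the last letter's suffix is the largest-prefix-share... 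I would pin this down via Observation-style reasoning about drops and ranges, or more directly from the fact that in a Lyndon word $w$, among all suffixes of $w$ the letter-length ordering interacts with lex order so that $w[j..n]$ extended by what follows is forced.

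The cleaner route, which I would actually take, is: establish the two-sided claim through the ``previous/next smaller value'' correspondence. Define, for each position $i$, the value $\NSV$-style quantity $j(i)$ to be the largest $j$ such that $\ISA[i] < \ISA[h]$ for all $h\in i\+1..j$; I claim $\s{x}[i..j(i)]$ is precisely the longest Lyndon factor at $i$. For the ``if'' direction of the theorem, assume $(a)$ and $(b)$; from $(a)$ together with the standard lemma I get that $\s{x}[i..j]$ is a Lyndon word, and from $(b)$ — which says $\s{s}(j\+1) \prec \s{s}(i)$, i.e. the next letter starts a smaller suffix — I get that the factor cannot be extended to a Lyndon word $\s{x}[i..j']$ with $j' > j$, because such an extension would require $\s{s}(i) \prec \s{s}(j\+1)$, contradicting $(b)$; hence it is longest. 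For the ``only if'' direction, assume $\s{x}[i..j]$ is a longest Lyndon factor; Lyndon-ness plus the lemma yields $(a)$, and maximality yields that extending by one position fails, which by the lemma's contrapositive gives exactly $(b)$. Throughout I would lean on Observation~\ref{obs-endrange} and Observation~\ref{obs-monge} to handle the edge behaviour at range boundaries and to guarantee that the ``longest'' factor is well-defined and that the $\ISA$ comparisons behave monotonically.

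The main obstacle I anticipate is the precise statement and proof of the bridging lemma ``$\s{x}[i..j]$ Lyndon $\iff$ $\ISA[i] < \ISA[h]$ for all $h\in i\+1..j$'', and in particular getting the index on the left of condition $(a)$ to be $j$ rather than $i$ as one might naively expect. I suspect the resolution is that once $\s{x}[i..j]$ is Lyndon, the suffix $\s{s}(j)$ is the \emph{largest} among the suffixes $\s{s}(i),\ldots,\s{s}(j)$ when these are compared only up to length $j\-i\+1$, but is the \emph{smallest} among $\s{s}(i\+1),\ldots,\s{s}(j)$ when the full tails are compared — this is a known property (the smallest suffix of $w$ is $w$ itself, i.e.\ starts at $i$; the second smallest starts at the position giving the longest Lyndon proper suffix, and for a single "range-free" Lyndon word this is $j$). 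So I would spend most of the proof carefully establishing: (1) $\ISA[i]$ is the unique minimum of $\ISA$ over $i..j$; (2) $\ISA[j]$ is the minimum of $\ISA$ over $i\+1..j$; and (3) the extension criterion $(b)$. Steps (1) and (3) are the standard NSV/Lyndon correspondence; step (2) is the delicate one and is where I expect to do real work, likely by induction on $j\-i$ or by invoking the standard-factorization structure of Lyndon words together with Observation~\ref{obs-compare}.
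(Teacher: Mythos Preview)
Your confusion about condition~$(a)$ is well-founded, but the resolution is simpler than you think: the theorem statement contains a typo. Condition~$(a)$ should read $\ISA[i] < \ISA[h]$, not $\ISA[j] < \ISA[h]$. The paper's own proof makes this explicit: it proves Lemma~\ref{maxlyn-suf}, which states the condition as $\s{s}_{\s{x}}(i) \prec \s{s}_{\s{x}}(k)$ for every $k \in i{+}1..j$, and then says the theorem is obtained ``using the relationship $\s{s}(i) \prec \s{s}(j) \Longleftrightarrow \ISA[i] < \ISA[j]$''. So the intended condition is that $\ISA[i]$, not $\ISA[j]$, is the minimum over $i..j$.

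Your proposed ``step~(2)'' --- that $\ISA[j]$ is the minimum of $\ISA$ over $i{+}1..j$ --- is in fact false, and you should not attempt to prove it. Take $\s{x} = aab$: here $\s{x}[1..3]$ is the longest Lyndon factor at position~$1$, the suffixes in lex order are $aab \prec ab \prec b$, so $\ISA = (1,2,3)$, and $\ISA[3] = 3$ is the \emph{maximum}, not the minimum, over positions $2..3$. Your instinct that ``the minimum $\ISA$ among positions $i..j$ should be at $i$, not $j$'' was correct; trust it and drop the reconciliation effort.

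With the typo corrected, your ``cleaner route'' paragraph is essentially the paper's argument, and you do not need Observations~\ref{obs-endrange}--\ref{obs-compare} at all. The paper establishes the forward direction (longest Lyndon factor $\Rightarrow$ suffix conditions) as Lemma~\ref{lemm-AB}, using Duval's Lemma~\ref{duval} to handle the case where the common prefix of $\s{s}(i)$ and $\s{s}(j{+}1)$ is at least as long as the factor itself. For the converse it does \emph{not} try to show directly that the suffix conditions force $\s{x}[i..j]$ to be Lyndon; instead it lets $\s{x}[i..k]$ be the actual longest Lyndon factor at $i$, applies the already-proved forward direction to it, and reads off a contradiction with~(a) if $k<j$ and with~(b) if $k>j$. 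This indirect route sidesteps the delicate case analysis you would otherwise face in proving ``$\s{s}(i)\prec\s{s}(h)$ for all $h$ implies $\s{x}[i..j]$ Lyndon'' directly.
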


The following well-known result is needed to prove Lemma~\ref{lemm-AB}:

\begin{lemm}[Duval, Lemma~1.6, \cite{D83}]\label{duval}
Suppose $\s{x} \in \Sigma^+$, where $\Sigma$
is an alphabet totally ordered by $\prec$.
Let $\s{x} = \s{u}^r\s{u_1}b$,
where \s{u} is nonempty, $r \ge 1$, \s{u_1} a possibly empty proper prefix of \s{u},
and the letter $b \ne \s{u}[|\s{u_1}|\+ 1]$.
\begin{my_itemize}
\item[$(a)$] If $b\prec \s{u}[|\s{u_1}|{+}1]$, then \s{u} is a longest Lyndon prefix
of \s{xy} for any \s{y};
\item[$(b)$] if $b\succ \s{u}[|\s{u_1}|{+}1]$, then \s{x} is Lyndon with respect to $\prec$.
\end{my_itemize}
\end{lemm}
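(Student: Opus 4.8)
The plan is to prove Lemma~\ref{duval} directly from the standard characterization that a word $\s{w}$ is Lyndon if and only if $\s{w} \prec \s{v}$ for every proper nonempty suffix $\s{v}$, together with two well-known facts: a Lyndon word is primitive and unbordered (no nonempty proper prefix is also a suffix); and if $\s{u'}$ is a proper nonempty suffix of a Lyndon word $\s{u}$, then $\s{u}^\omega \prec \s{u'z}$ for every word $\s{z}$, with the deciding mismatch occurring within the first $|\s{u}|$ positions. (Throughout I assume $\s{u}$ is a Lyndon word — the statement presupposes this; without it part $(b)$ would, for instance, wrongly certify $\s{x} = bac$ as Lyndon, taking $\s{u} = ba$, $r = 1$, $\s{u_1}$ empty.) The last fact is the one I would set up carefully at the outset: $\s{u'}$ cannot be a prefix of $\s{u}^\omega$ (that would make $\s{u'}$ a border of $\s{u}$), so $\s{u'}$ and $\s{u}^\omega$ first disagree at a position $q \le |\s{u'}| < |\s{u}|$; there $\s{u}^\omega[q] = \s{u}[q]$, and since $\s{u} \prec \s{u'}$ while $\s{u'}$ is too short to be a prefix of $\s{u}$, that first disagreement of $\s{u'}$ with $\s{u}$ (hence with $\s{u}^\omega$) has $\s{u}[q] \prec \s{u'}[q]$.

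For part $(b)$, where $b \succ \s{u}[|\s{u_1}|\+ 1]$, I would check $\s{x} = \s{u}^r\s{u_1}b \prec \s{v}$ for every proper nonempty suffix $\s{v}$, splitting on where $\s{v}$ begins. If $\s{v}$ begins at a copy boundary of $\s{u}^r$ or is the tail $\s{u_1}b$, then $\s{v} = \s{u}^{r-t}\s{u_1}b$ for some $1 \le t \le r$, so $\s{x}$ and $\s{v}$ share the prefix $\s{u}^{r-t}$, after which $\s{x}$ resumes with a fresh copy of $\s{u}$ whose letter in position $|\s{u_1}|\+ 1$ is $\s{u}[|\s{u_1}|\+ 1] \prec b$ while $\s{v}$ resumes with $\s{u_1}b$; that mismatch gives $\s{x} \prec \s{v}$. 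If $\s{v}$ begins strictly inside a copy of $\s{u}$, then $\s{v}$ starts with a proper nonempty suffix $\s{u'}$ of $\s{u}$; since $\s{x}$ opens with a full copy of $\s{u}$ it agrees with $\s{u}^\omega$ over its first $|\s{u}|$ letters, so by the fact above the mismatch between $\s{u}^\omega$ and $\s{v}$ lies inside that block and favours $\s{x}$, giving $\s{x} \prec \s{v}$. If $\s{v}$ begins inside $\s{u_1}$ past its first letter (or $\s{v} = b$), I would compare $\s{v}$ with the opening copy of $\s{u}$ directly: for $\s{v} = b$, $\s{x}[1] = \s{u}[1] \preceq \s{u}[|\s{u_1}|\+ 1] \prec b$; otherwise $\s{v} = \s{u_1}[q\dd|\s{u_1}|]\,b$, where $\s{u_1}[q\dd|\s{u_1}|] = \s{u}[q\dd|\s{u_1}|]$ is a prefix of the proper suffix $\s{u}[q\dd|\s{u}|]$ of $\s{u}$, so the mismatch $b$ against $\s{u}[|\s{u_1}|\+ 1]$ forces $\s{v} \succ \s{u}[q\dd|\s{u}|] \succ \s{u}$, the deciding position lies within the first $|\s{v}| < |\s{u}|$ letters, and hence $\s{x} \prec \s{v}$.

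For part $(a)$, where $b \prec \s{u}[|\s{u_1}|\+ 1]$ and $\s{y}$ is arbitrary, $\s{u}$ is a Lyndon prefix of $\s{xy}$ because $r \ge 1$, so it remains to rule out longer Lyndon prefixes. Let $\s{p}$ be a prefix of $\s{xy}$ with $|\s{p}| > |\s{u}|$. If $|\s{p}| \le r|\s{u}| \+ |\s{u_1}|$, then $\s{p}$ is a prefix of $\s{u}^{r+1}$, so $\s{p} = \s{u}^s\s{z}$ with $s \ge 1$ and $\s{z}$ a possibly empty proper prefix of $\s{u}$; if $\s{z}$ is empty then $\s{p} = \s{u}^s$ with $s \ge 2$ is a proper power, and if $\s{z}$ is nonempty then the suffix of $\s{p}$ of length $|\s{p}| \m |\s{u}|$, namely $\s{u}^{s-1}\s{z}$, is also a prefix of $\s{p}$ because $\s{z}$ is a prefix of $\s{u}$, a nonempty proper border; either way $\s{p}$ is not Lyndon. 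If instead $|\s{p}| > r|\s{u}| \+ |\s{u_1}|$, then $\s{p} = \s{u}^r\s{u_1}b\,\s{y'}$ for a possibly empty prefix $\s{y'}$ of $\s{y}$, and the suffix $\s{v} = \s{u}^{r-1}\s{u_1}b\,\s{y'}$ of $\s{p}$ agrees with $\s{p}$ through $\s{u}^{r-1}$ and then through $\s{u_1}$, after which $\s{p}$ has $\s{u}[|\s{u_1}|\+ 1]$ while $\s{v}$ has $b \prec \s{u}[|\s{u_1}|\+ 1]$; so $\s{v} \prec \s{p}$ and $\s{p}$ is not Lyndon. Hence the longest Lyndon prefix of $\s{xy}$ is $\s{u}$.

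Most of this is routine bookkeeping over the case split. The one point that needs real care is the unaligned case of part $(b)$: one must be certain that the deciding mismatch always falls within the opening copy of $\s{u}$, so that the comparison against $\s{u}^\omega$ transfers verbatim to a comparison against $\s{x}$ — this is precisely what the fact established at the outset, together with the length bound $|\s{v}| < |\s{u}|$, delivers. I would also check the degenerate subcases ($\s{u_1}$ empty, $\s{y}$ empty, $r = 1$), which collapse several of the mismatch positions onto the first letter but do not otherwise affect the argument.
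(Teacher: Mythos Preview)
The paper does not prove this lemma at all: it is quoted verbatim as Duval's Lemma~1.6 from \cite{D83} and used as a black box in the proof of Lemma~\ref{lemm-AB}. There is therefore no ``paper's own proof'' to compare against.

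Your proof is correct and self-contained. You are right to flag that the statement tacitly assumes $\s{u}$ is itself a Lyndon word (this is explicit in Duval's original formulation; the paper's transcription drops it), and your counterexample $\s{x}=bac$ with $\s{u}=ba$ shows part~$(b)$ fails without it. The case split in part~$(b)$ is complete and the key mechanism---that the first mismatch between $\s{u}$ and any proper nonempty suffix $\s{u'}$ occurs strictly within the first $|\s{u'}|$ positions, by unborderedness---is exactly what makes the ``unaligned'' case go through. In the sub-case where $\s{v}=\s{u}[q\dd|\s{u_1}|]\,b$, the chain $\s{v}\succ\s{u}[q\dd|\s{u}|]\succ\s{u}$ with the deciding position confined to the first $|\s{v}|$ letters is slightly compressed (one must check separately whether the mismatch of $\s{u}$ against $\s{u}[q\dd|\s{u}|]$ falls before or at position $|\s{v}|$), but both branches yield $\s{u}[|\s{v}|]\preceq\s{u}[|\s{u_1}|{+}1]\prec b$, so the conclusion stands. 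Part~$(a)$ is clean: the border argument for $|\s{p}|\le r|\s{u}|{+}|\s{u_1}|$ and the explicit smaller suffix $\s{u}^{r-1}\s{u_1}b\s{y'}$ for longer $\s{p}$ are the standard moves.
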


For a given string $\s{x}[1..n]$, let $\s{s}_{\s{x}}(i) = \s{x}[i..n]$
denote the suffix of \s{x} beginning at position $i$.
When clear from context we write just $\s{s}(i)$.

\begin{lemm}\label{lemm-AB}
Consider a string $\s{x}=\s{x}[1\dd n]$ over alphabet $\Sigma$
totally ordered by $\prec$.
Let $\s{x}[i\dd j]$ be the longest
Lyndon factor of $\s{x}$ starting at $i$.
Then $\s{s}_{\s{x}}(i)\prec \s{s}_{\s{x}}(k)$ for every $k \in i\+ 1..j$
and either $j=n$  or $\s{s}_{\s{x}}(j{+}1) \prec \s{s}_{\s{x}}(i)$.
\end{lemm}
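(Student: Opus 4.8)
The plan is to prove the two assertions separately, in both cases exploiting maximality of the Lyndon factor $\s{x}[i\dd j]$ together with Duval's Lemma~\ref{duval} to control what happens at the "break" position $j\+ 1$.

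First I would prove the forward-reaching claim: if $j < n$ then $\s{s}_{\s{x}}(j{+}1) \prec \s{s}_{\s{x}}(i)$. Since $\s{x}[i\dd j]$ is a \emph{longest} Lyndon factor starting at $i$, the string $\s{x}[i\dd j\+ 1]$ is not Lyndon. I would apply Lemma~\ref{duval} to $\s{x}[i\dd j\+ 1]$ written as $\s{u}^r\s{u_1}b$ with $b = \s{x}[j\+ 1]$: case~$(b)$ of that lemma would make $\s{x}[i\dd j\+ 1]$ Lyndon, contradicting maximality, so we are in case~$(a)$, i.e. $b \prec \s{u}[|\s{u_1}|\+ 1]$ and $\s{u}$ is a longest Lyndon prefix of $\s{x}[i\dd n]$. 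But $\s{x}[i\dd j]$ is itself Lyndon and a prefix of $\s{x}[i\dd n]$, and it is the \emph{longest} such by hypothesis, so $\s{u} = \s{x}[i\dd j]$, forcing $r = 1$ and $\s{u_1}$ empty. Then $b = \s{x}[j\+ 1] \prec \s{u}[1] = \s{x}[i]$, and a strict inequality in the first letter immediately gives $\s{s}_{\s{x}}(j{+}1) \prec \s{s}_{\s{x}}(i)$.

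Next I would prove that $\s{s}_{\s{x}}(i) \prec \s{s}_{\s{x}}(k)$ for every $k \in i\+ 1\dd j$. Fix such a $k$. Since $\s{x}[i\dd j]$ is Lyndon, it is strictly smaller than each of its proper suffixes, so $\s{x}[i\dd j] \prec \s{x}[k\dd j]$. If this comparison is decided by a mismatch strictly inside $\s{x}[i\dd j]$, the same mismatch decides $\s{s}_{\s{x}}(i) \prec \s{s}_{\s{x}}(k)$ and we are done. The remaining case is that $\s{x}[k\dd j]$ is a \emph{proper prefix} of $\s{x}[i\dd j]$, i.e. $\s{x}[i\dd j]$ has the border $\s{x}[k\dd j]$; then $\s{s}_{\s{x}}(i)$ and $\s{s}_{\s{x}}(k)$ agree on the first $j\- k\+ 1$ letters and the comparison continues at position $j\+ 1$ of \s{x} versus position $i\+(j\- k\+ 1)$ of \s{x}. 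Here I would invoke part~$(a)$ already proved (when $j<n$): $\s{x}[j\+ 1] \prec \s{x}[i]$, while $\s{x}[i\+(j\- k\+ 1)]$ is a letter of the Lyndon word $\s{x}[i\dd j]$ and, because $\s{x}[i]$ is the least letter of a Lyndon word, $\s{x}[i] \preceq \s{x}[i\+(j\- k\+ 1)]$; combining, $\s{x}[j\+ 1] \prec \s{x}[i] \preceq \s{x}[i\+(j\- k\+ 1)]$, giving $\s{s}_{\s{x}}(i) \prec \s{s}_{\s{x}}(k)$. If instead $j = n$, then $\s{s}_{\s{x}}(k)$ is exhausted first, so $\s{s}_{\s{x}}(i)$ has $\s{s}_{\s{x}}(k)$ as a proper prefix and is therefore strictly larger — wait, that is the wrong direction, so this border-and-$j=n$ situation must in fact be impossible, and I would rule it out by noting that a Lyndon word cannot equal one of its own proper suffixes extended to the end, i.e. $\s{x}[k\dd j]$ being a proper prefix of the Lyndon word $\s{x}[i\dd j]$ with $k>i$ is consistent, but then $\s{x}[i\dd j]$ would have period $k-i < |\s{x}[i\dd j]|$ with the prefix equal to the suffix, contradicting primitivity of a Lyndon word unless the border is empty.

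The main obstacle I anticipate is precisely this bordered case in part~(b): handling it cleanly requires either the period/primitivity argument just sketched or a careful appeal to part~(a), and one must be sure the letter being compared against $\s{x}[j\+ 1]$ really is bounded below by $\s{x}[i]$ — which is where the standing assumption (from the reduction in Section~\ref{sect-prelim}, cf.\ Observation~\ref{obs-endrange}) that we may take $\s{x}[i]$ to be minimal in the relevant factor does the work. Everything else is a routine unwinding of the definition of lexicographic order on suffixes together with Lemma~\ref{duval}.
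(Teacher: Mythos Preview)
Your argument for $\s{s}_{\s{x}}(j\+ 1)\prec \s{s}_{\s{x}}(i)$ has a genuine gap: you conclude that $\s{x}[j\+ 1]\prec\s{x}[i]$, but this is simply false in general. Take $\s{x}=aabaa$ with $i=1$; the longest Lyndon factor at $1$ is $aab$, so $j=3$, yet $\s{x}[4]=a=\s{x}[1]$. The reason your use of Lemma~\ref{duval} breaks down is the hypothesis $b\neq\s{u}[|\s{u_1}|\+ 1]$: with $\s{u}=\s{x}[i\dd j]$, $r=1$, $\s{u_1}=\varepsilon$, you need $\s{x}[j\+ 1]\neq\s{x}[i]$, which you have not established and which can fail. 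When $\s{x}[j\+ 1]=\s{x}[i]$ the string $\s{x}[i\dd j\+ 1]$ has \emph{no} decomposition meeting the hypotheses of Lemma~\ref{duval}, so the case split ``(a) or (b)'' never gets off the ground. The paper's proof avoids this by arguing toward a contradiction from $\s{s}(i)\prec\s{s}(j\+ 1)$: it follows the common prefix of $\s{s}(i)$ and $\s{s}(j\+ 1)$ as far as necessary (possibly across several copies of $\s{x}[i\dd j]$) to the first mismatch, and only there invokes Lemma~\ref{duval}(b) to produce a strictly longer Lyndon factor starting at $i$.

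Your treatment of $\s{s}_{\s{x}}(i)\prec\s{s}_{\s{x}}(k)$ is salvageable but the write-up is tangled. The clean fact you need is that Lyndon words are \emph{unbordered} (strictly stronger than primitive; $aba$ is primitive but bordered), so the proper suffix $\s{x}[k\dd j]$ can never be a prefix of $\s{x}[i\dd j]$, and the inequality $\s{x}[i\dd j]\prec\s{x}[k\dd j]$ is always decided by a mismatch, hence lifts to $\s{s}(i)\prec\s{s}(k)$. This is exactly what the paper's one-line argument uses implicitly; your detour through ``invoke part (a)'' and the $j=n$ discussion is unnecessary once you state unborderedness directly.
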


\begin{proof}
Because $\s{x}[i\dd j]$ is Lyndon, therefore for any $i<k\leq j$, $\s{x}[i\dd j]\prec \s{x}[k\dd j]$ and so
$\s{s}(i)\prec s(k)$. If $j=n$, we are done. So we may assume
$j < n$, and we want to show that $\s{s}(j{+}1)\prec \s{s}(i)$.
Suppose then that $\s{s}(j{+}1)\not\prec \s{s}(i)$.
Since $\s{s}(i)$ and $\s{s}(j{+}1)$ are distinct,
it follows that
$\s{s}(i)\prec \s{s}(j{+}1)$.
If we let $d=lcp(\s{s}(i),\s{s}(j{+}1))+1$, two cases arise:
\begin{my_itemize}
\item[(a)] $0\leq d\leq j-i$.\\
Here $i\leq i+d\leq j$. Thus $\s{x}[i\dd i{+}d{-}1]=\s{x}[j{+}1\dd j{+}d]$
and $\s{x}[i{+}d]\prec \s{x}[j{+}1{+}d]$, and so for $j<k\leq j{+}1{+}d$,
$\s{x}[i\dd j{+}1{+}d]\prec \s{x}[k\dd j{+}1{+}d]$. Since
$\s{x}[i\dd j]$ is Lyndon, $\s{x}[i\dd j]\prec \s{x}[k\dd j]$ and
so $\s{x}[i\dd j{+}1{+}d]\prec \s{x}[k\dd j{+}1{+}d]$ for any $i<k\leq j$.
Thus $\s{x}[i\dd j{+}1{+}d]$ is Lyndon,
contradicting the assumption that
$\s{x}[i\dd j]$ is the longest Lyndon factor starting at $i$.
\item[(b)] $0 < j-i\leq d$.\\
Let $d = r(j-i)+d_1$, where $0\leq d_1<j-i$. Then $r\geq 1$ and
$\s{x}[i\dd j{+}1{+}d]=\s{u}^r\s{u_1}b$ where
$\s{u}=\s{x}[i\dd j]$, $$\s{u_1} = \s{x}[j{+}r(j{-}i){+}1\dd j{+}r(j{-}i){+}d_1{-}1] = \s{x}[j{+}r(j{-}i){+}1\dd j{+}d{-}1]$$ is a prefix of
$\s{x}[i\dd j]$, and $\s{x}[i{+}d]\prec \s{x}[j{+}1{+}d]$, so that
by Lemma~\ref{duval}~(b),
$\s{x}[i\dd j{+}1{+}d]$ is Lyndon, contradicting the assumption that $\s{x}[i\dd j]$
is the longest Lyndon factor starting at $i$.
\end{my_itemize}
Thus $\s{s}(j{+}1)\prec \s{s}(i)$, as required. \qed
\end{proof}

Lemma~\ref{maxlyn-suf} describes the property of being a longest
Lyndon factor of a string \s{x} in terms of relationships between corresponding suffixes.

\begin{lemm}\label{maxlyn-suf}
Consider a string $\s{x} = \s{x}[1\dd n]$ over an alphabet $\Sigma$
with an ordering $\prec$.
A substring $\s{x}[i\dd j]$ is a longest
Lyndon factor of \s{x} with respect to $\prec$ if and only if
$\s{s}_{\s{x}}(i)\prec \s{s}_{\s{x}}(k)$ for every $k \in i+1..j$ and
either $j=n$  or $\s{s}_{\s{x}}(j{+}1) \prec \s{s}_{\s{x}}(i)$.
\end{lemm}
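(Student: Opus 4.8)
The plan is to prove Lemma~\ref{maxlyn-suf} as a biconditional, noting that the forward direction is exactly Lemma~\ref{lemm-AB}, which has already been established. So the only work left is the converse: assuming the suffix conditions $\s{s}(i)\prec\s{s}(k)$ for all $k\in i{+}1..j$ and ($j=n$ or $\s{s}(j{+}1)\prec\s{s}(i)$), I must show $\s{x}[i..j]$ is the longest Lyndon factor starting at $i$.

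First I would show that $\s{x}[i..j]$ is a Lyndon word. A clean way is to use the standard characterization: $\s{w}$ is Lyndon iff $\s{w}$ is strictly smaller than each of its proper nonempty suffixes. Take any proper nonempty suffix $\s{x}[k..j]$ with $i<k\le j$. From $\s{s}(i)\prec\s{s}(k)$ I get a first position of disagreement, and I need to argue this disagreement occurs at or before position $j$ (i.e.\ within the window $\s{x}[i..j]$ versus $\s{x}[k..j]$), so that it already witnesses $\s{x}[i..j]\prec\s{x}[k..j]$ --- unless $\s{x}[k..j]$ is a proper prefix of $\s{x}[i..j-k+i]$, in which case $\s{x}[i..j]$ having $\s{x}[k..j]$ as a prefix forces the comparison the right way as well. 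This is a routine prefix/first-mismatch argument on suffixes; I would be slightly careful to treat the border case where $\s{x}[k..j]$ is a prefix of $\s{x}[i..j]$ (a ``border'' of $\s{x}[i..j]$), because then the suffix comparison $\s{s}(i)$ vs $\s{s}(k)$ is decided by letters beyond position $j$, and I must invoke the hypothesis $\s{s}(j{+}1)\prec\s{s}(i)$ (or $j=n$) to rule this out or to still conclude correctly.

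Next I would show maximality. Suppose for contradiction that $\s{x}[i..j']$ is Lyndon for some $j'>j$ (so necessarily $j<n$). Applying the already-proved Lemma~\ref{lemm-AB} to the factor $\s{x}[i..j']$ would give $\s{s}(i)\prec\s{s}(k)$ for all $k\in i{+}1..j'$ and in particular, if $j'<n$, $\s{s}(j'{+}1)\prec\s{s}(i)$; but that is not an immediate contradiction by itself. The sharper tool is the hypothesis $\s{s}(j{+}1)\prec\s{s}(i)$: since $i<j{+}1\le j'$ (using $j'>j$, so $j{+}1\le j'$), Lemma~\ref{lemm-AB} applied to the Lyndon factor $\s{x}[i..j']$ yields $\s{s}(i)\prec\s{s}(j{+}1)$, directly contradicting $\s{s}(j{+}1)\prec\s{s}(i)$. (If instead $j'=n$ were possible with $j<n$, the same contradiction arises because $j{+}1\le n=j'$.) Hence no such $j'$ exists and $\s{x}[i..j]$ is the \emph{longest} Lyndon factor at $i$.

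The main obstacle is the first step --- verifying that $\s{x}[i..j]$ is genuinely Lyndon purely from suffix-order hypotheses --- because of the border case: when a proper suffix $\s{x}[k..j]$ coincides with a prefix of $\s{x}[i..j]$, the strict inequality between the truncated strings does not follow from $\s{s}(i)\prec\s{s}(k)$ alone, and one must feed in the ``drop'' hypothesis $\s{s}(j{+}1)\prec\s{s}(i)$ (or the endpoint case $j=n$) to close the gap. Everything else is bookkeeping with first mismatches and one appeal to the previously proved Lemma~\ref{lemm-AB} in the maximality argument. Once both directions are in hand, Lemma~\ref{maxlyn-suf} follows, and with it Theorem~\ref{thrm-main} after translating suffix order $\s{s}(i)\prec\s{s}(k)$ into the inverse-suffix-array condition $\ISA[i]<\ISA[k]$, which is immediate from the definition of $\ISA$.
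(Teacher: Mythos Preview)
Your plan is correct, and the border case can indeed be closed exactly as you suggest: if $\s{x}[k..j]$ is a prefix of $\s{x}[i..j]$ then $\s{s}(i)$ and $\s{s}(k)$ agree on their first $j-k+1$ letters, so $\s{s}(i)\prec\s{s}(k)$ forces $\s{s}(m)\prec\s{s}(j{+}1)$ with $m=i+j-k+1\in i{+}1..j$; combined with the hypothesis $\s{s}(i)\prec\s{s}(m)$ this gives $\s{s}(i)\prec\s{s}(j{+}1)$, contradicting $\s{s}(j{+}1)\prec\s{s}(i)$ (and the case $j=n$ is immediate since then $\s{s}(k)$ would be a proper prefix of $\s{s}(i)$). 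One small point: in your maximality step you invoke Lemma~\ref{lemm-AB} on a Lyndon factor $\s{x}[i..j']$ that is not assumed longest; this is fine because the conclusion you use, $\s{s}(i)\prec\s{s}(j{+}1)$, comes from the first sentence of that lemma's proof, which only needs $\s{x}[i..j']$ to be Lyndon.

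The paper takes a shorter and rather slicker route for (B)$\Rightarrow$(A). Instead of verifying directly that $\s{x}[i..j]$ is Lyndon and then arguing maximality, it simply lets $\s{x}[i..k]$ be \emph{the} longest Lyndon factor starting at $i$ (which always exists, since a single letter is Lyndon) and applies Lemma~\ref{lemm-AB} to it. A trichotomy then finishes: if $k<j$ then $\s{s}(k{+}1)\prec\s{s}(i)$ contradicts the hypothesis at $k{+}1\le j$; if $k>j$ then $\s{s}(i)\prec\s{s}(j{+}1)$ contradicts $\s{s}(j{+}1)\prec\s{s}(i)$; hence $k=j$. This bypasses entirely the need to analyse borders or first mismatches. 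Your approach has the virtue of being self-contained (it shows Lyndon-ness from scratch), but the paper's indirect argument is noticeably cleaner and avoids the delicate case analysis you correctly flagged as the main obstacle.
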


\begin{proof}
Let (A) denote \emph{$\{\s{x}[i\dd j]$ is a longest Lyndon factor of $\s{x}\}$} and
let (B) denote
\emph{$\{\s{s}(i)\prec \s{s}(k)$ for any $1\leq k\leq j$  and
$\s{s}(j{+}1)\prec \s{s}(i)\}$}.
Then (A) $\Rightarrow$ (B) follows from Lemma~\ref{lemm-AB}, so we need to
prove that (B) $\Rightarrow$ (A).

\medskip
\noindent
Suppose then that (B) holds,
and let $\s{x}[i\dd k]$ be a longest Lyndon factor of \s{x} starting at position $i$.
If $k<j$, then by Lemma~\ref{lemm-AB}, $\s{s}(k{+}1)\prec \s{s}(i)$, a contradiction
since $k{+}1\leq j$. If $k>j$, then by Lemma~\ref{lemm-AB},
$\s{s}(i)\prec \s{s}(j{+}1)$ because $j{+}1\leq k$, which again gives us a contradiction.
Thus $k=j$ and $\s{x}[i\dd j]$ is a longest Lyndon factor of \s{x}.  \qed
\end{proof}
Now we reformulate Lemma~\ref{maxlyn-suf} in terms
of the inverse suffix array $\ISA$ of \s{x} using the relationship that $\s{s}(i)\prec s(j) \Longleftrightarrow \ISA[i] < \ISA[j]$,
thus yielding Theorem~\ref{thrm-main}, as required.
Hence the Lyndon array can be computed in a simple three-step
algorithm, as shown in Figure~\ref{fig-3line},
that executes in $\theta(n)$ time and uses only one additional
array of integers.

\section*{Appendix 3}
Here we describe a simple data structure
that yields an alternative approach to Algorithm $\NSV^*$,
based on the comparison of longest Lyndon factors as
described in Lemma~\ref{maxlyn-suf}.
The \itbf{dictionary of basic factors} \cite{Crochemore2007,CR02}
of string $\s{x}[1..n]$ consists of a
sequence of arrays $\mathcal D_t, 0 \leq t \leq \log n$.
The array $\mathcal D_t$ records information about factors of $\s{x}$ of length $2^t$ ---
that is, the basic factors.
In particular, $\mathcal D_t[i]$
stores the rank of $\s{x}[i..i + 2^t - 1]$, so that
$$\s{x}[i..i + 2^t - 1] \preceq \s{x}[i..i + 2^t - 1] \Leftrightarrow \mathcal D_t[i] \leq  \mathcal D_t[i].$$
This dictionary requires $O(n \log n)$ space and
can be constructed in $O(n \log n)$ time as follows.
$\mathcal D_0$ contains information about consecutive symbols of $\s{x}$
and hence can be computed in $O(n \log n)$ time
by sorting all the symbols appearing in $\s{x}$ and mapping them to numbers from 1 and onward.
Once $\mathcal D_t$ is computed, we can easily compute $\mathcal D_{t+1}$ by
spending $O(n)$ time on a radix sort,
because $u[i..i+2^{t+1} -1]$ is in fact
a concatenation of the factors $u[i..i+2^t -1]$ and $u[i+2^t..i+2^{t+1} -1]$.

Once this dictionary is computed, we can compare any two factors by comparing two appropriate overlapping basic factors (i.e., factors having length power of two), which is done by checking the corresponding $\mathcal D$ array from the dictionary. This will require constant time and hence each suffix-suffix comparison can be done in constant time.

\section*{Appendix 4}
\begin{lemm}
\label{lemm-rm}
Let $RM^{(h)}(k)$ denote the number of range matches
needed by Algorithm $\NSV^*$ to compute $\s{\lambda}_{\s{x_k}}$ of length $n = (h\+ 1)m$,
where $m = 2^{k+1}$ is the number of ranges in \s{x_k}.
Then $RM^{(h)}(k) = m(\log_2 m\- 1) \+ 1 \in \Theta(n\log n)$.
\end{lemm}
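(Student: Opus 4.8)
The goal is to count exactly the number of range matches $RM^{(h)}(k)$ that Algorithm $\NSV^*$ performs on the input $\s{x_k}$, and show this equals $m(\log_2 m \m 1) \+ 1$ where $m = 2^{k+1}$. Since $n = (h\+ 1)m$ and $h$ is fixed, $m = \Theta(n)$, so $m\log_2 m = \Theta(n\log n)$, giving the asymptotic claim. The plan is to proceed by induction on $k$, tracking precisely how the recursive doubling construction $\s{x_k} = \s{x_{k-1}x^*_{k-1}}$ interacts with the NSV stack.

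\textbf{Base case.} First I would verify $k = 0$ directly: $\s{x_0} = a^hba^hc_0$ has $m = 2^1 = 2$ ranges, namely $\s{x_1}^{(0)} = a^h b$ and $\s{x_2}^{(0)} = a^h c_0$ (the drop occurs because $b > a$, and $c_0 > a$ starts the second range which runs monotonically to $c_0$). The formula predicts $RM^{(h)}(0) = 2(\log_2 2 \m 1) \+ 1 = 2(0) \+ 1 = 1$. I would check by hand-tracing $\NSV^*$ on this string: index $1$ is pushed; as we scan, indices within the first range get pushed onto ACTIVE until the suffix comparison forces pops, and the single genuine range match occurs when comparing the sequence ending the first range against the start of the second before the final sentinel forces termination. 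The key is that the first range and the second range differ only in their last letter, so COMP returns $0$ on the range pair $(\s{x_1},\s{x_2})$ exactly once, triggering one call to MATCH.

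\textbf{Inductive step.} Assuming $RM^{(h)}(k-1) = m'(\log_2 m' \m 1)\+1$ with $m' = 2^k$, I would analyze the run of $\NSV^*$ on $\s{x_k} = \s{x_{k-1}\,x^*_{k-1}}$. The crucial structural fact is that $\s{x^*_{k-1}}$ is an exact copy of $\s{x_{k-1}}$ except in its final letter, which is strictly larger ($c_k > c_{k-1}$). So when $\NSV^*$ processes the first half $\s{x_{k-1}}$, it behaves exactly as in the length-$m'$ computation up until — and this is the subtle point — the comparisons that would involve the sentinel are instead replaced by comparisons against the start of the (nearly identical) second half. Because the second half matches the first half range-for-range until the very last range, every range comparison that ``used to'' terminate against $\$$ now returns $\delta_1 = 0$ and forces a MATCH over a longer and longer sequence of ranges. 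I would argue that processing $\s{x^*_{k-1}}$ incurs: (i) the $RM^{(h)}(k-1)$ matches internal to the second half (it looks locally like a fresh copy), plus (ii) a number of additional ``cross-half'' matches as stacked indices from the first half get popped against the larger $c_k$ at the end. The recurrence I expect to derive is $RM^{(h)}(k) = 2\,RM^{(h)}(k-1) \+ c$ for an appropriate correction term $c$ accounting for the cross-half matches, and I would solve it to match $m(\log_2 m \m 1)\+1$. Plugging in: if $RM^{(h)}(k-1) = 2^k(k\m1)\+1 = 2^k(k) - 2^k + 1$, then $2^{k+1}(k) - 2^{k+1} + 2 \+ c$ must equal $2^{k+1}((k\+1)\m1)\+1 = 2^{k+1}k \+ 1$, forcing $c = 2^{k+1} - 1 = m - 1$; so the correction term is exactly ``one cross-half match per range in the second half, minus one.'' I would justify $c = m-1$ by observing that after the first half is fully processed, ACTIVE holds roughly one index per range of the first half (the range-minimum positions), and each such index is popped via one range comparison against the strictly larger tail, with the final pop coinciding with stack emptying.

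\textbf{Main obstacle.} The delicate part is step (ii): carefully accounting for the cross-half matches and confirming that no range comparison is double-counted or omitted, and in particular verifying the exact contents of ACTIVE at the moment the scan crosses from $\s{x_{k-1}}$ into $\s{x^*_{k-1}}$. I would need a clean invariant describing the stack state after processing a prefix that is itself a Lyndon word of this doubling form — essentially that ACTIVE contains precisely the first positions of the ranges that are ``still active'' (not yet assigned in $\s{\lambda}$), which should be all $m'$ range-starts of the first half since $\s{x_{k-1}}$ as a standalone string is Lyndon and so nothing internal is finalized until a smaller suffix appears. Getting this invariant right, and confirming that the auxiliary \nexteq\ and \per\ machinery does not alter the \emph{count} of range matches (it only changes the \emph{values} written), is where the real work lies; the arithmetic of solving the recurrence is then routine.
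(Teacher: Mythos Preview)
Your inductive approach is different from the paper's, which instead counts range matches \emph{per position} directly: it shows that the rightmost two ranges contribute the vector $(1,0)$, the next two to the left contribute $(2,2)$, the next four contribute $(3,3,3,3)$, and in general the block of $2^{j}$ ranges at distance $2^{j}$ from the right each account for exactly $j{+}1$ matches. Summing $\sum_{j=0}^{k}(j{+}1)2^{j}$ then gives $k\,2^{k+1}+1 = m(\log_2 m - 1)+1$ by elementary manipulation. No recurrence, no stack invariant.

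Your recurrence $RM^{(h)}(k)=2\,RM^{(h)}(k-1)+c$ with $c=m-1$ is arithmetically consistent with the target, but your justification for it has a real gap. You attribute $RM^{(h)}(k-1)$ ``internal'' matches to \emph{each} half, yet the first half, being a Lyndon word $\s{x_{k-1}}$ not followed by a sentinel but by a near-copy of itself, does not reproduce the standalone run: nothing is popped while scanning it, so its internal match count is \emph{not} $RM^{(h)}(k-1)$. Worse, your two informal explanations of the correction term contradict the value you back-solved: ``one cross-half match per range in the second half, minus one'' gives $m/2-1$, and ``ACTIVE holds one index per range of the first half, each popped via one comparison'' gives $m/2$; neither is $m-1=2^{k+1}-1$. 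Since you obtained $c$ by assuming the formula you are trying to prove, the argument is circular until you can independently establish where all $m-1$ extra matches come from. In fact the extra matches arise not from a single sweep of pops at the end but from the fact that every first-half range start is compared against \emph{two} second-half positions (once when its near-copy arrives and once more when the strictly larger tail $c_k$ arrives), which is exactly what the paper's per-position count captures. If you want to salvage the inductive route, you would do better with the recurrence $RM^{(h)}(k)=RM^{(h)}(k-1)+(k{+}1)2^{k}$, which follows immediately from the paper's observation that each of the new leftmost $2^{k}$ positions contributes $k{+}1$ matches.
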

\begin{proof}
Consider the rightmost two ranges $\s{s_0} = a^hba^hc_k$ of \s{x_k}.
$\NSV^*$ requires one range match to discover that $a^hb < a^hc_k$,
which we may denote by the vector $(1,0)$ that associates the one match
with the leftmost of the two ranges being compared.
Similarly, with the rightmost four ranges $\s{s_1} = a^hba^hc_0a^hba^hc_k$ of \s{x_k}
we may associate the vector $(2,2,1,0)$, counting a maximum two more range matches
performed by $\NSV^*$ on each of $a^hb$ and $a^hc_0$ with ranges to their right.
Observe that as the vector is extended to the left, the existing elements are unchanged.
Now consider the four ranges $a^hba^hc_0a^hba^hc_{k-1}$
that form the prefix of \s{s_2} on the left of \s{s_1}.
It is easy to see that the maximum number of range matches associated with the start positions
of these four ranges
can be counted $(3,3,3,3)$,
thus extending the vector to $(3,3,3,3,2,2,1,0)$.
The next eight positions on the left will yield a maximum $(4,4,4,4,4,4,4,4)$
range matches,
and so on, until the beginning of \s{x_k} is reached.
Thus
\begin{eqnarray*}
RM^{(h)}(k) & = & \sum_{j=0}^k (j\+ 1)2^j \\
& = & \sum_{j=1}^k j2^j + \sum_{j=0}^k 2^j \\
& = & (k(2^{k+2}\- 2^{k+1})\- 2^{k+1}\+ 2) + (2^{k+1}\- 1) \mbox{\ \ \cite[p.\ 33]{KNUTH73I}} \\
& = & k2^{k+1}\+ 1,
\end{eqnarray*}
and so $RM^{(h)}(k) = m(\log_2 m \- 1) \+ 1$, as required.  \qed
\end{proof}
Consider the vectors formed in the proof of Lemma~\ref{lemm-rm}
that count range matches.
Each position in the righthand vector $(1,0)$
is clearly largest possible
over all selections of ranges,
as are the preceding positions $(2,2)$.
Similarly, none of the values in $(3,3,3,3)$ can possibly be greater than 3:
in each case the three matches result from inequalities in the last positions
of the ranges being matched.
We see that in fact the vector corresponding to \s{x_k} must be maximal, and so,
when each range match requires constant time
(proportional to $\sigma$):
\begin{lemm}
\label{lemm-pnsv}
Algorithm P$\NSV^*$ computes $\s{\lambda}_{\s{x}}$ in $\O(n\log n)$ time for all \s{x}.
\end{lemm}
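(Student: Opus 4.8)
The plan is to upper-bound the total running time of $\PNSV^*$ by (i) the cost of the $\Theta(n)$-time preprocessing (computing the ranges of \s{x} and the Parikh vectors $P_r$), plus (ii) the number of range matches performed by the stack-based scan, each costing $\O(\sigma)$ time, plus (iii) the $\O(1)$ bookkeeping per stack operation. The nontrivial term is (ii), and here I would not argue about an \emph{arbitrary} input directly; instead I would reduce to the worst-case family $\s{x_k}$ already analysed in Lemma~\ref{lemm-rm}, and invoke the maximality remark that precedes the statement.

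First I would make precise the ``charging'' scheme implicit in the proof of Lemma~\ref{lemm-rm}: every range match performed by $\NSV^*$ is charged to the start position of the \emph{leftmost} of the two ranges (or sequences of ranges) being compared, so that the total number of range matches is $\sum_{r} c_r$, where $c_r$ is the number of matches charged to range \s{x_r}. The key structural claim is that $c_r$ is bounded above by the quantity that would be assigned to position $r$ in the ``count vector'' of a suitable prefix of some $\s{x_k}$ — intuitively, each match charged to \s{x_r} corresponds to a comparison whose outcome is forced by an inequality in the last letter of one of the ranges involved (cf.\ (\ref{drop}) and Observation~\ref{obs-compare}), and the recursive doubling structure of $\NSV^*$'s stack means that the ranges to the right of \s{x_r} that can still be ``active'' and get compared with \s{x_r} form at most $\O(\log m)$ nested groups. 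So $c_r = \O(\log m) = \O(\log n)$ for every $r$, giving $\sum_r c_r = \O(m\log m) = \O(n\log n)$ range matches in total, matching the $\Theta(n\log n)$ bound attained by the family $\s{x_k}$.

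Given that bound on the number of range matches, the rest is routine: the preprocessing that builds the $P_r$ is a single $\O(n)$-time scan (ranges are monotone nondecreasing, so each $P_{r,j}$ is obtained in $\O(1)$ incremental time), each invocation of $\COMP$ or $\MATCH$ on two ranges compares two Parikh vectors and hence runs in $\O(\sigma)$ time, and the number of \bpush/\bpop\ operations on \ACTIVE\ is $\O(n)$ since each index is pushed and popped at most once. Multiplying the $\O(n\log n)$ range matches by the $\O(\sigma)$ per-match cost and adding the $\O(\sigma n)$ preprocessing and $\O(n)$ stack overhead yields total time $\O(\sigma n\log n)$, which is $\O(n\log n)$ for constant $\sigma$, as required.

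The main obstacle I anticipate is making the ``$c_r = \O(\log n)$ for every input'' claim airtight, i.e.\ justifying rigorously that the count vector of an arbitrary string is dominated, position by position, by that of the worst-case family. This requires a careful invariant on the contents of the stack \ACTIVE\ — essentially that the indices currently on the stack have strictly increasing associated suffixes $\s{s}_{\s{x}}(\cdot)$ (by Lemma~\ref{maxlyn-suf} and the NSV discipline), so that any particular index $j$ can be the ``$j$'' in at most one comparison per ``level'' of the implicit divide-and-conquer on ranges, and the number of levels is $\ceil{\log_2 m}$. Everything else (preprocessing cost, per-match cost, amortised stack cost) is bookkeeping that I would state but not belabour.
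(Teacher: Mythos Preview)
Your proposal is correct and follows essentially the same route as the paper: invoke Lemma~\ref{lemm-rm} for the worst-case range-match count, appeal to the positionwise maximality of the count vector of the family $\s{x_k}$ over all inputs, and multiply by the $\O(\sigma)$ per-match cost of the Parikh-vector comparison. You are right to flag the maximality step as the only nontrivial obligation; the paper itself treats it informally (``clearly largest possible'', ``we see that \ldots must be maximal''), so your sketch is already at least as rigorous as the published argument.
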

Consider now the execution of $\NPNSV^*$ on the strings $\s{x_k}$.
Instead of one comparison per range match by $\PNSV^*$,
now $h\+ 1$ letter comparisons are required.
For $h = 1$, the number of comparisons per range match
is therefore $2$, a multiple by a constant factor,
thus still linear time per match.
For arbitrary $h > 2$, the number of comparisons increases
by a factor of $h$, but at the same time range length
(and therefore string length) increases by a factor of $(h\+ 1)/2$,
so that still $O(n\log n)$ ranges are processed
in $O(n\log n)$ time.
Thus
\begin{lemm}
\label{lemm-npnsv}
Algorithm $\NPNSV^*$ computes $\s{\lambda}_{\s{x}}$
in $\O(n\log n)$ time for all \s{x}.
\end{lemm}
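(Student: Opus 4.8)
\medskip\noindent {\sl Proof idea.\/}\enspace
Algorithm $\NPNSV^*$ differs from $\PNSV^*$ in exactly one respect: a single range match is carried out by scanning letters instead of by an $\O(\sigma)$-time Parikh-vector comparison. The plan is thus to leave the accounting of range matches of $\PNSV^*$ untouched and merely re-weight each match by its letter cost. First I would note that, because a range is monotone nondecreasing in the letters of $\Sigma$ (see~(\ref{range})), a letter-by-letter comparison of two ranges --- or of two suffixes of ranges --- $\s{x_r}$ and $\s{x_{r'}}$ halts either where they first disagree or at the end of the shorter one, and so costs $\O(\min(\ell_r,\ell_{r'}))$ time, where $\ell_r,\ell_{r'}$ are the lengths of the two full ranges; by Observation~\ref{obs-compare} the outcome correctly orders the suffixes begun by those ranges, so this is a legitimate substitute for $\COMP$ and $\MATCH$. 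Every remaining per-match action in Figure~\ref{nsv*} --- the stack operations and the \nexteq, \per, \offset\ updates --- takes $\O(1)$ time, exactly as in $\PNSV^*$.

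Consequently, for an input $\s{x}$ with $m$ ranges of lengths $\ell_1,\dots,\ell_m$, where $\sum_{r=1}^{m}\ell_r = n$, the running time of $\NPNSV^*$ on $\s{x}$ is, up to a constant factor, the number of range matches plus $\sum\min(\ell_r,\ell_{r'})$, the sum taken over all range matches. By the analysis underlying Lemma~\ref{lemm-pnsv} --- namely Lemma~\ref{lemm-rm} together with the maximality argument that follows it --- $\NSV^*$ performs $\O(n\log n)$ range matches on every input, so the first term is $\O(n\log n)$ and it remains only to bound $\sum\min(\ell_r,\ell_{r'})$. For this I would charge each range match, which in $\NSV^*$ always compares a range starting at the current stack top $j$ with a range starting at the current index $i>j$, to its left-hand range $\s{x_r}$ (the one near $j$); since $\min(\ell_r,\ell_{r'})$ never exceeds the length $\ell_r$ of that left-hand range, the amount charged to a fixed range $\s{x_r}$ is at most $\O(\ell_r)$ times the number of range matches in which $\s{x_r}$ is the left-hand range. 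The crucial input is that this last quantity is $\O(\log m)$ for every range of every input: it is exactly the $r$-th entry of the ``range-match vector'' that the discussion following Lemma~\ref{lemm-rm} assigns to each range, and that discussion argues the vector of the family $\s{x^{(h)}_k}$ --- each of whose entries is at most $\log_2 m$ --- to be maximal over all inputs. Summing, $\sum\min(\ell_r,\ell_{r'}) \le \sum_{r=1}^{m}\O(\ell_r\log m) = \O\!\bigl((\log m)\textstyle\sum_{r=1}^{m}\ell_r\bigr) = \O(n\log n)$, and together with the first term this proves the lemma.

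The step I expect to be the real obstacle is precisely this appeal to the maximality of the $\s{x^{(h)}_k}$ range-match vector --- the claim that \emph{every} range of \emph{every} input serves as the left-hand range of only $\O(\log n)$ range matches. The text argues it only informally (``the vector corresponding to $\s{x_k}$ must be maximal''). To make it airtight one must track the $\ACTIVE$ stack of $\NSV^*$ and show, by induction on the left-to-right processing order, that a range beginning at a fixed position can be supplied to $\COMP$ or $\MATCH$ as the $j$-side operand at most $\O(\log m)$ times --- the role of the \nexteq\ and \per\ arrays being precisely to stop an already-matched periodic block from being rescanned, as the example just before Lemma~\ref{lemm-rm} illustrates. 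A more modest route that still suffices for the asymptotic statement is to observe that the family $\s{x^{(h)}_k}$ simultaneously maximises the number of range matches (Lemma~\ref{lemm-rm}) and, since all its ranges share the common length $h\+1$, the weight attached to each match; substituting into Lemma~\ref{lemm-rm} then yields total time $(h\+1)\bigl(m\log_2 m\-m\+1\bigr) = n\log_2 m\-n\+(h\+1) = \O(n\log n)$, which both reproduces and underpins the informal argument already sketched in the text.
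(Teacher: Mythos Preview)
Your ``more modest route'' at the end is exactly the paper's own argument: the paper simply observes that on the worst-case family $\s{x^{(h)}_k}$ each range match costs $h\+1$ letter comparisons instead of $\O(\sigma)$, and that this multiplicative factor is absorbed because $n = (h\+1)m$, yielding $\O(n\log n)$. Your first, charging-based argument --- bounding $\sum\min(\ell_r,\ell_{r'})$ by $\sum_r \ell_r\cdot(\text{matches with }r\text{ on the left})$ --- is more general than anything the paper attempts, and would in principle cover strings with non-uniform range lengths that the paper's argument does not directly address; however, both your routes and the paper's rest on the same informal maximality claim following Lemma~\ref{lemm-rm}, which you correctly single out as the step that is asserted rather than proved.
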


\end{document}